\def\BibTeX{{\rm B\kern-.05em{\sc i\kern-.025em b}\kern-.08em
T\kern-.1667em\lower.7ex\hbox{E}\kern-.125emX}}
\newtheorem{theorem}{Theorem}
\newtheorem{lemma}{Lemma}
\newtheorem{definition}{Definition}
\begin{document}

\title{Middle-mile Network Optimization \\ in Rural Wireless Meshes}

\author{\IEEEauthorblockN{Yung-Fu Chen}
\IEEEauthorblockA{\textit{Dept. of Computer Science and Engineering} \\
\textit{The Ohio State University}\\
Columbus, OH, USA \\
chen.6655@osu.edu}
\and
\IEEEauthorblockN{Anish Arora}
\IEEEauthorblockA{\textit{Dept. of Computer Science and Engineering} \\
\textit{The Ohio State University}\\
Columbus, OH, USA \\
anish@cse.ohio-state.edu}
}

\maketitle

\begin{abstract}
The status quo of limited broadband connectivity in rural areas motivates the need for fielding alternatives such as long-distance wireless mesh networks. A key aspect of fielding wireless meshes cost-effectively is planning how to connect the last-mile networks to the core network service providers (i.e., the network between the edge access terminals and the landline / optical fiber terminals) with minimal infrastructure cost and throughput constraints. This so-called middle-mile network optimization, which includes topology construction, tower height assignment, antenna and orientation selection, as well as transmit power assignment, is known to be a computationally hard problem.

In this paper, we provide the first polynomial time approximation solution for a generalized version of the middle-mile network optimization problem, wherein point-to-point (i.e., WiFi p2p) links are deployed to bridge last-mile networks. Our solution has a cost performance ratio of $O(\ln{|A|}+\frac{|B|}{|A|}+\frac{|A|+|B|}{\gamma})$, where A and B respectively denote the number of terminals and non-terminals and $\gamma$ is the ratio of $\frac{link\ capacity}{terminal\ demand}$. Furthermore, our solution extends to hybrid networks, i.e., point-to-multipoint (i.e., WiFi p2mp) or omnidirectional (i.e., TV White Space) can serve as hyperlinks in addition to point-to-point links, to further reduce the cost of wireless links. We provide a complementary heuristic for our middle-mile network optimization solution that adds hyperlinks if and only if they reduce the cost.
\end{abstract}

\begin{IEEEkeywords}
Network planning, Rural connectivity, Low cost networking, Wireless mesh networks, 802.11/.22, WiFi, TVWS
\end{IEEEkeywords}

\section{Introduction}
Even as the demand for broadband escalates worldwide to support real-time Internet services such as streaming, distance learning and telehealth, rural areas tend to be underserved if not unserved.  In the US alone, of the 24 million Americans who lack access to fixed broadband services of at least 25 Mbps/3 Mbps as of 2016, the vast majority live in rural areas: 31.4\% of the population in rural areas versus 2.1\% in urban areas.  And while fiber optic cable deployment continues, its high cost implies that wireless backbones remain a necessary part of the addressing the broadband digital divide for the foreseeable future.  


Among the wireless approaches, wireless mesh networks offer a resilient alternative for spanning tens to a hundred kilometers in rural areas. They offer the potential of being low-cost, in part because they use unlicensed spectrum, unlike cellular network. IEEE 802.11-based long-distance wireless mesh networks have therefore been widely deployed to fill rural broadband gaps worldwide, \cite{chebrolu2006long, airjaldi, matthee2007bringing}. Directional antennas of small beamwidth with WiFi radio are mounted on the top of towers of height sufficient to line-of-sight propagation over the links. TV White Space (TVWS) meshes are also starting to proliferate \cite{msairband, mspartnerships} as a low-cost alternative since their spectrum offers non-line-of-sight propagation over foliage and obstructions, as compared to the 2.4GHz and 5GHz spectrum used in many IEEE 802.11 meshes. This allows TVWS to use lower height towers than WiFi, which is particularly cost-efficient for hilly rural areas. Also, the IEEE 802.22 TVWS range is competitive with that of 802.11 long-distance wireless links: a base station can provide the transmission radius up to 100 kilometers  \cite{cordeiro2005ieee,stevenson2009ieee}. 

A major challenge for wireless meshes is sustainability. The growth of demand for their services versus their profitability is a chicken-and-egg problem, which is made harder by a lack of networking planning tools that support incremental deployment.  Part of the challenge is to contain the deployment cost, which can be dominated by tower costs.  Towers are typically located at or near population hubs that need network connectivity, but suitable towers/structures are not easily available or afforded to providers.  Tower cost is height dependent (say \$100-\$5000 for towers from 10-45m). Communication equipment cost is also a factor: while WiFi equipment is comparatively cheap (say \$50 for WiFi p2p antennas), TVWS equipment
is yet to enjoy economy of scale.

Anecdotally, our experience in dealing with rural broadband wireless providers corroborate the challenge in cost, throughput assurance, and flexibility for network growth. In a project \cite{bdspoke} on infrastructure supporting rural counties in southern Ohio for data-driven services and analytics related to the opioid epidemic, we interacted with providers to deploy WiFi, TVWS, or hybrid meshes that would be leveraged by project stakeholders among other users.  Providers point to high costs incurred for tower access, socio-political difficulty of leveraging community-shared tower structures for cost reduction, and networks planning that tends to err on the side of wasted capacity or does not efficiently support incremental growth of the networks. 

\textbf{Wireless Mesh Planning and MNO.}  Network planning in wireless meshes can be divided into the {\em last-mile network optimization (LNO)}  problem and the {\em middle-mile network optimization (MNO)}  problem. Both are NP-hard \cite{aoun2006gateway,sen2007long}. The former problem involves placing a minimum number of edge access terminals (i.e., gateways) to collect traffic flows from home APs while satisfying their QoS requirements. A polynomial time LNO approximation algorithm \cite{aoun2006gateway} exists that recursively calculates minimum weighted Dominating Sets (DS) while guaranteeing QoS requirements in each iteration. 

MNO seeks to connect the edge-access terminals to core network service providers with minimal infrastructure cost, while satisfying the throughput constraints of each edge-access terminal. The MNO problem was first formulated \cite{sen2007long} as: 
\begin{quote}
\textit{Given a set of terminals to be connected to a given landline (and a set of non-terminals with existing towers to be used as relay vertices), determine the network topology, the tower heights of terminals, antenna types and orientations, transmit powers, and the route for each terminal to the landline, such that throughput constraints of terminals are satisfied and the total cost of towers and antennas is minimized.}
\end{quote}
MNO problem is not yet well solved:  Its solution in \cite{sen2007long} has exponential time complexity, due to the formulation of height assignment and transmit power assignment as Linear Programming (LP) problems. Although a greedy approach has been proposed \cite{panigrahi2008minimum} for the NP-hard subproblem of topology construction along with tower height assignment, which has an approximation ratio of $O(\log{n})$, we are not aware of approximation algorithms for MNO problem per se. 


\textbf{Contributions of the paper.} In this paper, we design an approximation algorithm to solve MNO problem.  To that end, we introduce the problem of Topology Construction of Minimum Steiner Tree (SteinerTC) problem.\footnote{The SteinerTC problem---which is derived from the Topology Construction (TC) problem \cite{panigrahi2008minimum}---incorporates topology construction and tower height assignment in MNO problem.} Then, we solve MNO problem in two parts:  solving first the SteinerTC problem and  solving thereafter the Capacitated Network Design (CND) problem \cite{hassin2004approximation}. Next, we analytically evaluate the worst-case cost bound of the solution. Lastly, we extend the algorithm for solving MNO problem in hybrid networks, that allow incorporation of point-to-multipoint (i.e., WiFi p2mp) and omnidirectional links (i.e., TVWS) as hyperlinks. We give a complementary heuristic that replaces p2p links by p2mp or omnidirectional hyperlinks if and only if they reduce the total cost.

The contribution of the paper is four-fold:  First, ours is the first polynomial-time approximation algorithm for MNO problem. Second, we show the cost performance ratio of our MNO solution is  $O(\ln{|A|}+\frac{|B|}{|A|}+\frac{|A|+|B|}{\gamma})$, where $|A|$ is the number of terminals, $|B|$ is the number of non-terminal vertices and $\gamma$ is the ratio of $\frac{link\ capacity}{terminal\ demand}$.  Third, for the sub-problem of SteinerTC, which we show is NP-hard  (Theorem~\ref{npSteinerTC}) by reducing from TC problem, we give a greedy solution with an approximation ratio of $O(\ln{|A|})$  for the tower cost (Theorem~\ref{apporxSteinerTC}).  Finally, our MNO solution for hybrid networks achieves Capacity Flexibility, i.e., the added hyperlinks yield residual capacity that can be leveraged to add additional edge access terminals in the future without incurring additional tower cost.

\textbf{Organization.} In Section \Romannum{2}, we discuss the related work on TC and CND problems. In Section \Romannum{3}, we formalize the MNO problem. We then present our approximation solution for a homogeneous case (p2p links only) of MNO problem and, in Section \Romannum{4}, analyze the cost bound. We describe, in Section \Romannum{5}, our complementary heuristic that further reduces the infrastructure cost by introducing hyperlinks, and make concluding remarks, in Section \Romannum{6}.

\section{Related Work}
\subsection{Topology Construction (TC)}
The TC problem is to determine a network topology that spans all the given terminals as well as a minimal height of antenna tower of each terminal such that line-of-sight propagation using a WiFi p2p link exists for each edge. TC was formalized as follows \cite{panigrahi2008minimum}: 
\begin{quote}
\textit{Given an undirected graph G = (V, E), obstruction locations and their heights on each edge in G, and a cost function of tower height, determine a set of tower heights at vertices in V such that the subgraph induced by the edges covered by the set of tower heights is a subgraph (i.e., a spanning tree) of G spanning all vertices such that the total cost of the towers is minimized.} 
\end{quote}
The TC problem is known to be NP-hard and solvable by iteratively choosing the height increments of a vertex and a set of its neighbors with the best cost-to-benefit ratio until a spanning tree is determined.

Note that this TC formalization only considers topologies that span all vertices, i.e., each vertex is a terminal. One way of reducing the cost is to leverage pre-existing towers in rural areas to host non-terminal vertices (i.e., relays), thus eschewing additional tower cost while reducing the tower cost of some terminals. With this approach, the problem of determining the minimum cost subgraph of G that connects all terminals and may include some non-terminals becomes that of computing the Minimum Steiner Tree. 
This leads us to pose the Topology Construction problem for Minimum Steiner Tree (SteinerTC) problem of determining a set of tower height of terminals forming a Steiner tree with minimized cost. In Section \Romannum{3} we show that SteinerTC is at least NP-hard and design an algorithmic solution for it. 

\subsection{Capacitated Network Design (CND)}
Also related to our work is the CND problem of installing adequate capacity for meeting the bandwidth needs between a landline and edge access terminals. CND was formalized as follows \cite{salman2001approximating, hassin2004approximation}: 
\begin{quote}\textit{Given an undirected graph G = (V, E), a set of terminals with their traffic demands, a landline, and uniform link capacity to be installed on edges, we are going to determine a minimum cost of capacity installation in G so that the installed capacity can route all traffics from terminals to the landline.} 
\end{quote}
The authors in \cite{hassin2004approximation} propose an approximation algorithm by introducing a relation between the CND problem and minimum Steiner tree problem. First, they take advantage of known solutions to Steiner tree problem \cite{takahashi1990approximate} to obtain a Steiner tree with minimum cost. Second, their solution divides terminals into groups whose internal traffic flows are equal to or less than the cable capacity, and assign a hub in each group to forward the aggregated traffic from the corresponding terminals to the landline through the additionally installed cable capacity\footnote{An edge in a graph can be installed more than one copy of cable.}, if necessary, on the shortest path between the hub and landline. Since the CND problem satisfies throughput constraints by finding a minimum cost of wired cable installation in a given graph instead of considering a wireless case where line-of-sight propagation is necessary to support the transmission of WiFi spectrum, the CND problem is only a particular instance of MNO problem. A solution to CND problem does not comprehensively solve MNO problem.


\section{Problem Statement}
In this section we describe the network model and state the middle-mile network optimization problem, whose goal, in addition to satisfying throughput constraints of terminals, is to minimize the cost infrastructure deployment consisting of towers and antennas, in a general way. The main notations are summarized in Table~\ref{notationMNO}. But first, we introduce the problem of SteinerTC.

\subsection{Topology Construction of Minimum Steiner Tree (SteinerTC) Problem and Hardness}
Unlike the TC problem, SteinerTC introduces non-terminals provided with towers as relays so as to minimize tower heights of terminals. We formulate the SteinerTC problem as follows: 
\begin{quote}
\textit{Given an undirected graph $G' = (V', E')$, a set of terminals $A \subset V'$, a set of non-terminals $B \subset V'$, $A \cup B = V'$, obstruction locations and their heights on each edge in $G'$, and a cost function of tower height, we are going to determine a set of tower heights at vertices in $A$ such that the subgraph induced by the edges covered by the set of tower heights in $V'$ is a subgraph (i.e., a Steiner tree) of $G'$ spanning all vertices in $A$ such that the total cost of the towers at terminals is minimized.} 
\end{quote}
Compared to the output graph of TC problem, which is a minimum cost spanning tree where the induced vertices are all terminals, the output graph of SteinerTC is a minimum cost Steiner tree where the spanned vertices consist of all terminals and some of the non-terminals. 

We give a brief proof of the following theorem by reducing TC to SteinerTC.

\begin{theorem} 
SteinerTC is at least as hard as TC, which is NP-hard.
\label{npSteinerTC}
\end{theorem}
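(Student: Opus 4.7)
The plan is to exhibit a polynomial-time reduction from the TC problem to the SteinerTC problem, so that any algorithm solving SteinerTC immediately yields one solving TC with identical cost. Since TC is known to be NP-hard, this establishes the claim.

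Given an arbitrary instance of TC, namely an undirected graph $G=(V,E)$ together with obstruction locations, obstruction heights, and the tower-height cost function, I would construct a SteinerTC instance $G'=(V',E')$ by taking $V'=V$, $E'=E$, declaring $A=V$ (every vertex is a terminal) and $B=\emptyset$ (no non-terminals), and keeping all obstruction data and the cost function unchanged. This construction is clearly computable in linear time in the size of the TC instance.

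Next I would argue equivalence of optimal values. Because $B=\emptyset$, any subgraph of $G'$ that spans all vertices of $A=V$ must span all of $V'$, so the set of feasible Steiner trees for the SteinerTC instance coincides exactly with the set of feasible spanning trees for the TC instance. Moreover, since the SteinerTC objective sums the tower cost only over terminals in $A$, and since $A=V$, this objective is identical to the TC objective, which sums tower cost over all of $V$. Therefore an optimal solution to the constructed SteinerTC instance is an optimal solution to the original TC instance, with matching cost. Hence a polynomial-time algorithm for SteinerTC would induce a polynomial-time algorithm for TC, proving SteinerTC is at least as hard as TC.

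I do not anticipate a substantive obstacle here; the reduction is syntactically trivial. The only subtlety worth stating explicitly is the alignment of the objective functions: the SteinerTC cost, by definition, counts only the tower heights erected at terminals, and since the reduction places every vertex in $A$, no cost is dropped when passing from the TC formulation to the SteinerTC formulation. With $B=\emptyset$ there is also no opportunity for the Steiner tree to bypass any required vertex via a relay, which rules out any spurious cost reduction in the SteinerTC instance.
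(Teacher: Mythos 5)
Your proof is correct, but it takes a genuinely different route from the paper's. You use the standard restriction argument: a TC instance is literally a SteinerTC instance with $A=V$ and $B=\emptyset$, the feasible trees and the objectives coincide, so any SteinerTC algorithm solves TC and SteinerTC inherits TC's NP-hardness. The paper instead argues in the other direction of construction: starting from a TC instance, it subdivides edges by inserting intermediate vertices with fixed tower heights and zero cost, chosen so that the tower-cost increments needed to restore line-of-sight along the subdivided path equal the original increments at $u$ and $v$; these inserted vertices become the non-terminals of a SteinerTC instance whose optimum matches the TC optimum. So the paper produces hard SteinerTC instances in which $B$ is genuinely non-empty (relays really exist), at the price of a cost-preservation argument that is delicate and only sketched, whereas your reduction is simpler and tighter. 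The one point you should patch is that the paper's formulation writes $A \subset V'$, $B \subset V'$ with $A \cup B = V'$; if $\subset$ is read as proper inclusion, $B=\emptyset$ is not a legal SteinerTC instance. This is repaired trivially: add one isolated dummy non-terminal (with its fixed tower) that is adjacent to nothing, so it can appear in no Steiner tree spanning $A$ and the optimal value is unchanged. With that remark, your argument fully establishes the theorem.
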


\begin{proof}
For any two neighbor vertices $u$ and $v$ in the graph $G$ in TC problem, assume we are allowed to create at least one intermediate vertex with fixed tower height between $u$ and $v$ but zero cost if there exists an edge $(u, v)$ of line-of-sight propagation associated with height increments for $u$ and $v$, denoted as $h^+\!(u)$ and $h^+\!(v)$ respectively. Accordingly, based on $h^+\!(u)$ and $h^+\!(v)$ which make edge $(u, v)$ with cost increment $cT\!ower^+\!(h^+\!(u))$ and $cT\!ower^+\!(h^+\!(v))$ in TC problem, we could create one or more intermediate vertices $i_1,\dots , i_r$ with fixed heights $h(i_1),\dots ,h(i_r)$ which make the same cost increment, $cT\!ower^+\!(h^+\!(u)')+ cT\!ower^+\!(h^+\!(v)')= cT\!ower^+\!(h^+\!(u)) + cT\!ower^+\!(h^+\!(v))$, to make edges $(u, i_1)$ and $(i_r, v)$ of line-of-sight propagation with heights $h^+\!(u)'$ and $h^+\!(v)'$.

Note that $h(i_1),\dots ,h(i_r)$ are sufficient to make line-of-sight propagation over edges $(i_1, i_2),\dots ,(i_{r-1}, i_r)$. Thus we can say that for any new paths between $(u, v)$, there exist a path with the minimal cost equal to the cost of $cT\!ower^+\!(h(u)) + cT\!ower^+\!(h(v))$. Some intermediate vertices with fixed tower heights could be reused as relays in a new path between any two neighbor vertices in G, thus the degree of intermediate vertex is $\geq$ 2.

Hence, for an instance of SteinerTC problem with the graph $G'$ = ($V'$, $E'$), there exists an instance of TC problem with the graph $G$ = ($V$, $E$), which can be transformed to $G'$, where all the intermediate vertices are non-terminals in $G'$.
\end{proof}

\begin{table}[htbp]
\caption{Notations for Middle-mile Network Optimization}
\begin{center}
\begin{tabular}{c|c}
\hline
\textbf{Symbol}&\textbf{Meaning} \\
\hline
$G$ &input graph of MNO problem\\
&$G$ = ($V$, $E$), $V = A \cup B$ \\
\hline
$A$&set of terminals \\
\hline
$B$ &set of non-terminals \\
\hline
$LN \in A$&landline (core network service provider) \\
\hline
$U \in \mathbb{N}$& capacity of wireless p2p link \\
\hline
$cT\!ower$& tower cost function \\
&$cT\!ower: [H\!T\!M\!I\!N, H\!T\!M\!A\!X] \rightarrow \mathbb{R}$\\
\hline
$H\!T\!M\!I\!N\! \in \!\mathbb{R}$&minimum tower height \\
\hline
$H\!T\!M\!A\!X\! \in \!\mathbb{R}$&maximum tower height \\
\hline
$cLink$& link cost function \\
&$cLink: \mathbb{N} \rightarrow \mathbb{R}$\\
\hline
$dem$&throughput demand function \\
&$dem: V \rightarrow [0, U]$, $\forall v \in B, dem(v)=0$ \\
\hline
$R \in \mathbb{R}$&maximal transmit distance of WiFi p2p links \\
\hline
$ob$&normalized obstruction height function \\
&$ob: E \rightarrow \mathbb{R}$ \\
\hline
$G_{SteinerT\!C}$&output graph of MNO problem \\
&$G_{SteinerT\!C} = (V_{SteinerT\!C}, E_{SteinerT\!C})$ \\
&$A \subseteq V_{SteinerT\!C} \subseteq V$ \\
\hline
$h$&tower height function \\
&$h: V \rightarrow [H\!T\!M\!I\!N, H\!T\!M\!A\!X]$ \\
&$\forall v \in B$, $h(v)$ is given and fixed \\
\hline
$rt$&route function from terminals to $LN$ \\
&$rt: V \rightarrow P$ \\
&$P$: set of paths composed of edges in $E_{SteinerT\!C}$ \\ 
\hline
$f$ &traffic flow function\\
&$f: E_{SteinerT\!C} \rightarrow \mathbb{R}$\\
\hline
$linkset_{M\!P}$& hyperlink function at vertices with $M\!P$ antennas \\
&$linkset_{M\!P}: V_{SteinerT\!C} \rightarrow C\!O\!N\!F\!S$ \\ 
&$C\!O\!N\!F\!S$: set of subset of $M\!P$ antenna configurations \\
&(see Algorithm~\ref{MP-ANT-REPLACE}) \\
\hline
$linkset_{O\!m\!n\!i}$&hyperlink function at vertices with $O\!m\!n\!i$ antennas \\
&$linkset_{O\!m\!n\!i}: V_{SteinerT\!C} \rightarrow VS$\\
&$V\!S$: set of subsets of $V_{SteinerT\!C}$ \\
\hline
\end{tabular}
\label{notationMNO}
\end{center}
\end{table}

\subsection{Reformulation of MNO}
Recall that the parameters to be determined in the MNO problem are network topology, tower heights, antenna type and orientations, transmit powers, and traffic routes. The SteinerTC problem---finding a minimum cost Steiner tree by identifying a set of tower heights---determines the first two of these parameters. Given this Steiner tree as an input graph, the CND problem is able to find a minimum capacity installation routing traffics to landline. This satisfies throughput constraints in a consideration of line-of-sight propagation over WiFi spectrum. Note that, in a homogeneous case (p2p links only) of the MNO problem, the parameters of antenna type and orientations and transmit powers can be simply determined on the basis of the edges in the Steiner tree. (In hybrid networks, the schemes of assigning antenna orientations of WiFi p2mp and transmit power are presented in Section \Romannum{5}.) Hence, solving SteinerTC and CND problems in sequence suffices to determine the parameters in the MNO problem. 

The input and output of MNO problem are reformulated as follows.
\begin{quote}
\textit{\textbf{Input}}. An undirected graph $G$ = ($V$, $E$), a set $A \subset V$ of terminals and a set $B \subset V$ of non-terminals, $A \cup B = V$, a landline $LN\! \in \! A$, a cost function of tower height, a cost function of link installation, and throughput demand $dem(v) \leq U$ for each terminal.
\end{quote}
Note that for any two vertices $u$ and $v$, $(u, v)\! \in \! E$, $dist(u, v) \leq R$. For each edge $(u, v)\! \in \! E$, there exists a normalized obstruction height $ob(u, v)$ at the middle of edge $(u, v)$. Line-of-sight propagation is available over a wireless p2p link connecting $u$ and $v$ if the sum of the tower heights at $u$ and $v$ is not less than twice of $ob(u, v)$.
\begin{quote}
\textit{\textbf {Output}}.~A connected graph $G_{SteinerT\!C}\!=\!(V_{SteinerT\!C}$, $E_{SteinerT\!C})$, 
along with a height function $h$ for each terminal in $A$, the route function $rt$ for each terminal in $V$ indicating the routing path from terminals to the landline, the traffic flow function $f$ for each edge in $E_{SteinerT\!C}$, and the hyperlink functions $linkset_{M\!P}$ and $linkset_{O\!m\!n\!i}$ if a hybrid network is constructed, such that throughput demands of terminals are satisfied and the cost of tower and wireless link installation $\sum_{v\in A}{cT\!ower(h(v))}+\sum_{e\in E_{SteinerT\!C}}{cLink(\lceil \frac{f(e)}{U} \rceil)}$ is minimized. 
\end{quote}
Note that $linkset_{M\!P}$ and $linkset_{O\!m\!n\!i}$ denote the set of hyperlinks. For vertices $v$ where p2mp antenna(s) are installed, $linkset_{M\!P}$ specifies the p2mp antenna configurations at $v$ that include the covered vertices and effective direction/range. For vertices $v$ where an omnidirectional antenna is installed, $linkset_{O\!m\!n\!i}$ returns vertices in $v$'s neighborhood that are connected to the omnidirectional antenna at $v$.

\section{Polynomial Time Approximation Algorithm}

\subsection{Solution to SteinerTC Problem}

A greedy algorithm, TC-ALGO, is proposed in \cite{panigrahi2008minimum} to solve TC with a logarithmic cost bound. In each iteration of TC-ALGO, all terminals are considered with different height increments associated with cost increments, using a doubling search, to find the one with the lowest cost-to-benefit ratio. For each combination of a terminal and a certain height increment considered, TC-ALGO considers its neighbor terminals with the minimum height increments to achieve line-of-sight links by calling the subroutine of STAR-TC-ALGO. Its evaluation metric is the cost-to-benefit ratio defined as the minimum increment in cost for the maximum reduction in the number of isolated components. A component of a graph denotes a maximal connected subgraph.

Given a terminal $v$ with a height increment $h^+\!(v)$, STAR-TC-ALGO sorts the terminal's neighbors $u_1, \dots, u_s$ by the cost of the neighbors' minimum height increments $cT\!ower^+\!(h(u_1)), \dots, cT\!ower^+\!(h(u_s))$, that cover edges $(v, u_1), \dots, (v, u_s)$. Then it returns the list of selected neighbors achieving lowest cost-to-benefit ratio associated with $v$ and $h^+\!(v)$. In each iteration of TC-ALGO, the terminal with the height increment and some of its neighbors with minimum sufficient height increments achieving the lowest cost-to-benefit ratio over all combinations are selected to add to $h$. TC-ALGO iteratively choose the height increments until the number of components is reduced to one.\footnote{The benefit, standing for the number of reduced components, is at least one. This guarantees the progress of component reduction in each iteration.} Moreover, components covered by the height function $h$ are denoted by $cover(h)$. At the beginning, $cover(h)$ stands for all isolated terminals due to zero height increment at all terminals.

In the TC problem all vertices are considered terminals and assigned tower heights. However, in SteinerTC problem some of the vertices considered non-terminals with fixed tower heights can work as relays between two terminals to further reduce the tower heights at terminals if applicable. To find the lowest cost-to-benefit ratio in SteinerTC problem, for a given terminal $v$ with a height increment, we search not only the adjacent terminals of $v$ but the terminal $u$ that can be connected through some non-terminals $v_1, \dots ,v_r$ by the path $(v, v_1, \dots ,v_r, u)$. We use the term of \textit{logical neighbor} of a given vertex $v$ to denote a terminal $u$ that can be connected by a single edge $(u, v)$ or a path $(v, v_1, \dots ,v_r, u)$ where only $v$ and $u$ are terminals.

STAR-SteinerTC-ALGO, presented in Algorithm~\ref{STAR-SteinerTC-ALGO}, is adapted from STAR-TC-ALGO. (Note that the main greedy routine, SteinerTC-ALGO, to SteinerTC problem remains the same as TC-ALGO.\footnote{In each iteration, the greedy algorithm of SteinerTC-ALGO  considers different height increments only for terminals, since the tower heights of non-terminals are fixed.}) In STAR-SteinerTC-ALGO, line 4 collects the logical neighbors of $v$ that are not in the same component as $v$ in $cover(h)$, since the benefit of $v$ cannot be improved by increasing the height of the vertices in the same component of $v$. The for loop from line 5 to 13 determines the smallest height increment of each vertex $u$ in $nbr_{logic}$ that makes $u$ connected to $v$. If there is a single edge $(u, v)$ between $u$ and $v$, we directly assign the smallest height increment at $u$ to cover this edge. Otherwise, we search a minimum cost path\footnote{The minimum cost path between $v$ and $u$ can be found in polynomial time since the heights of $v$ and all non-terminals are given.} from $v$ to $u$ including only non-terminals, $v_1, \dots, v_s$, except $u$ and $v$ and then assign the smallest height increment at $u$ to cover $(v_s, u)$. Note  that the tower heights at $v_1, \dots ,v_s$ must be high enough to cover edges $(v, v_1), \dots, (v_{s-1}, v_s)$. A vertex with the lowest cost increment (and well as the lowest height increment) is chosen from the same component and its vertices are sorted in ascending order of the cost increment from line 14 to 17. Note also that each element in $L$ reduces the number of components in $cover(h)$ by exact one. Then we can find the lowest cost-to-benefit ratio by selecting of first $k$ elements in $L$. The for loop  from line 19 to 24 determines the size of $k$ to obtain the lowest cost-to-benefit ratio $r'_{best}$. The remaining lines add the corresponding height increments for $v$ and the chosen logical neighbors of $v$ to the height increment function $incr$ and then return it. 

The performance bound of our greedy algorithm is shown in Theorem~\ref{apporxSteinerTC}, whose proof is relegated to Appendix \ref{FirstAppendix}. 
\begin{theorem}
The tower cost associated with the height function $h$ determined by SteinerTC-ALGO is at most $2 \ln{|A|}O\!P\!T$. 
\label{apporxSteinerTC} 
\end{theorem}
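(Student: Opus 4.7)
\textbf{Proof proposal for Theorem~\ref{apporxSteinerTC}.}

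The plan is to carry out a greedy set-cover style analysis, tracking the number of components of $cover(h)$ as the potential function. Let $OPT$ denote the cost of an optimal Steiner tree $T^*$ with its associated height assignment $h^*$. Let $k_i$ denote the number of components in $cover(h)$ at the start of iteration $i$ of SteinerTC-ALGO (so $k_1 = |A|$ and the algorithm terminates when $k = 1$). Let $c_i$ and $b_i$ be the cost increment and the benefit (components merged) in iteration $i$.

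The central step is to prove the inequality $c_i / b_i \leq 2\,OPT/(k_i - 1)$ at each iteration. To do so, I would first contract every maximal path of non-terminals in $T^*$ to obtain a \emph{logical tree} $\widetilde{T}^*$ whose vertices are exactly the terminals in $T^*$ and whose edges are paths whose internal vertices are non-terminals. For every terminal $v$, form the star $S_v$ centered at $v$: it consists of $v$ and its logical neighbors in $\widetilde{T}^*$. Charge to $S_v$ the tower-cost increment of $v$ plus, for each logical neighbor $u$ of $v$, the minimum height-increment cost at $u$ sufficient to cover the logical edge $(v,u)$ (through its fixed-height non-terminal relays, exactly as STAR-SteinerTC-ALGO would compute). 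Since the cost of a terminal's tower in $T^*$ need only equal the cost induced by the single most demanding incident logical edge, and since $cTower$ is monotone, every terminal's contribution to $OPT$ is at least the maximum of its per-edge contributions to the stars; hence by double counting over the at most two endpoints of each logical edge, $\sum_{v \in A} \mathrm{cost}(S_v) \leq 2\,OPT$.

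Next I would observe that taken together, this collection of stars would connect all $k_i$ remaining components of $cover(h)$ into one, so the sum of star-benefits (measured relative to the current state $cover(h)$) is at least $k_i - 1$. An averaging argument then yields a star whose cost-to-benefit ratio at the current state is at most $2\,OPT/(k_i - 1)$. Because STAR-SteinerTC-ALGO enumerates every terminal as a potential center, every feasible height increment at that center (via doubling), and every prefix of the sorted logical-neighbor list, it examines a star that realizes at least as good a ratio as this average. Consequently the greedy ratio chosen satisfies $c_i/b_i \leq 2\,OPT/(k_i - 1)$, i.e., $c_i \leq 2\,OPT \cdot b_i/(k_i-1)$.

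Summing over iterations and using the standard telescoping bound $\sum_{i} b_i/(k_i - 1) \leq \sum_{j=1}^{|A|-1} 1/j \leq \ln |A|$ (valid because the $b_i$ consecutive integer values swept out by each iteration tile the range $\{1,\dots,|A|-1\}$), one obtains $\sum_i c_i \leq 2 \ln |A| \cdot OPT$, which is the claimed bound. The main obstacle I anticipate is the star-decomposition step: one must argue that charging per-edge costs to each endpoint in $T^*$ really does produce a valid collection of stars whose summed cost is at most $2\,OPT$ despite the nonlinear, per-vertex nature of the tower-cost function. Monotonicity of $cTower$ combined with $cTower(\max_j x_j) \leq \sum_j cTower(x_j)$ should close this gap, but it is the one place the argument departs nontrivially from the TC-ALGO analysis of \cite{panigrahi2008minimum} because of the presence of non-terminal relays inside the logical edges.
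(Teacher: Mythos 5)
Your overall skeleton (per-iteration ratio bound against $OPT$, then a telescoping/harmonic argument) matches the paper, but the key structural step is wrong. The claim $\sum_{v\in A}\mathrm{cost}(S_v)\le 2\,OPT$ does not hold for your terminal-centered star decomposition, because tower cost is a \emph{per-vertex} quantity that serves all incident logical edges: a terminal $u$ of logical degree $d$ in $T^*$ has its per-edge increment charged once in each of the $d$ stars centered at its logical neighbors, so the sum of star costs can be $\Theta(\Delta)\cdot OPT$ rather than $2\,OPT$. Concretely, take an optimal tree that is a broom: one terminal $c$ whose tower increment costs $C$ is logically adjacent to $t$ terminals each needing only an $\epsilon$-increment. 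Then $OPT\approx C$, but each leaf-centered star $S_{u_j}$ carries a charge $\approx C$ for the edge $(c,u_j)$ at $c$, so $\sum_v \mathrm{cost}(S_v)\approx tC$ while $\sum_v \mathrm{benefit}(S_v)\approx 2t$; your averaging then only certifies a star of ratio $\approx C/2$, far weaker than the needed $2\,OPT/(k_i-1)\approx 2C/t$ (a good star, $S_c$, exists, but your argument does not find it). The inequality you hoped would close the gap, $cT\!ower(\max_j x_j)\le\sum_j cT\!ower(x_j)$, points the wrong way; what you would need is $\sum_j cT\!ower^+(x_j)\le 2\,cT\!ower^+(\max_j x_j)$, which is false for degree $>2$. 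This is exactly why the paper (following Klein--Ravi) uses a \emph{spider decomposition} (Lemma~\ref{SpiderDecomp}) of the optimum after contracting the current components into supervertices: the spiders are vertex-disjoint, so their total incremental cost is at most $OPT$ with each vertex counted once, each spider with $m_j$ leaves contributes benefit $m_j-1$, and the case of a zero-cost non-terminal root is folded into the ratio at one of its terminal leaves (Lemma~\ref{SteinerTCc2b}).

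Two smaller issues: first, the factor $2$ in the paper's bound does not come from endpoint double counting at all; it comes from Lemma~\ref{SteinerTC2approximation}, i.e., the doubling search over height increments at the chosen center only guarantees a $2$-approximation to the best star centered there. Your sketch asserts that the algorithm ``examines a star that realizes at least as good a ratio as this average,'' which ignores that loss; if your decomposition also spent a factor $2$, you would end up at $4\ln|A|\,OPT$. Second, your closing harmonic bound $\sum_{j=1}^{|A|-1}1/j\le\ln|A|$ is false ($H_{|A|-1}>\ln|A|$ already for moderate $|A|$); the paper instead uses the inequality $\frac{c_i}{b_i+1}\le\frac{OPT}{\phi_{i-1}}$ together with $\phi_i\le\phi_{i-1}\bigl(1-\frac{c_i}{2OPT}\bigr)$ and $\ln(1+x)\le x$ to reach $2\ln|A|\,OPT$ exactly. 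Your logical-tree construction also needs care when the optimum has non-terminal branching points (the terminal-adjacency graph is then not a tree), though that is repairable; the vertex-cost double counting is the essential gap.
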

\begin{algorithm}
\SetAlgoLined
Input: $G = (V, E)$, terminal set $A$, non-terminal set $B$, height function $h$, vertex $v$, height increment $\delta$ at $v$\\
Output: cost-benefit-ratio $r'_{best}$, height increment function $incr$\\
$cT\!ower^+\!(h(v))$ := $cT\!ower(h(v)+\delta)$ - $cT\!ower(h(v))$\;
$nbr_{logic}$ := $\{u \mid v$, $u \in A$, $u$ is not in the same component as $v$ in $cover(h)$, there exists a path $p$ = $(v, v_1, \dots ,v_r, u)$ from $v$ to $u$ in $G$ s.t. $|p| \geq 2$, $v_1, \dots, v_r \in B$, heights $(h(v) + \delta)$ at $v$ and heights ${h(v_1), \dots, h(v_r)}$ cover edges $(v, v_1), \dots, (v_{r-1}, v_r)$\;

\For{each vertex $u \in nbr_{logic}$}{
\eIf{$(u, v) \in E$}{
$h^+\!(u)$ := smallest $\beta$ s.t. heights $(h(v) + \delta)$ at v and $(h(u) + \beta)$ at $u$ cover edge $(u, v)$\;
}{
search the path $p_{best}$ = $(v, v_1, \dots,v_s, u)$ from $v$ to $u$ with lowest cost increment\;
$h^+\!(u)$ := smallest $\beta$ s.t. heights $(h(v) + \delta)$ at v and $(h(u) + \beta)$ at $u$ cover path $p_{best}$ = $(v, v_1, \dots,v_s, u)$\;
}
$cT\!ower^+\!(h(u))$ := $cT\!ower(h(u)$ + $h^+\!(u))$ - $cT\!ower(h(u))$\;
}
$L$ := list of vertices in $nbr_{logic}$ in ascending order of $cT\!ower^+$\;
\For{each component $D \in cover(h)$}{
remove from $L$ all vertex $u \in D \cap nbr_{logic}$ except the one with lowest $cT\!ower^+$\;
}
$r'_{best}$ := $\infty$; $k_{best}$ := 0\;
\For{$1 \leq k \leq |L|$}{
$r'_{tmp}$ := $\frac{cT\!ower^+\!(h(v)) + \sum_{i=1}^{k}{cT\!ower^+\!(h(L[i]))}}{k}$\;
\If{$r'_{tmp} < r'_{best}$}{
$k_{best}$ := $k$; $r'_{best}$ := $r'_{tmp}$\;
}
}
\For{$u \in V$}{ $incr(u)$ := 0\;}
\For{$u \in L[1 \dots k_{best}]$}{$incr(u)$ := $h^+\!(u)$\;}
$incr(v)$ := $\delta$\;
\Return $(r'_{best}, incr)$\;
\caption{STAR-SteinerTC-ALGO}
\label{STAR-SteinerTC-ALGO}
\end{algorithm}

\vspace*{-5mm}
\subsection{Solution to CND Problem}
The approximation algorithm presented in \cite{hassin2004approximation} installs capacity per a given minimum cost Steiner tree to route the traffic flows from terminals to the landline. The algorithm first partitions terminals into groups according to their throughput demands. Note that the total aggregated traffic in a group is upperbounded by the cable capacity. For each group, it chooses the vertex with shortest path to the landline as a hub. The algorithm of the grouping guarantees that any internal aggregated traffic in a group does not exceed $U - dem(v_h)$, where $U$ is the uniform cable capacity and $v_h$ is the hub in this group. Thus, installing only one cable over every edge in the minimum cost Steiner tree can support the traffic flows from terminals to hubs. This capacity installation makes the connectivity in the Steiner tree.

Moreover, for each group, one cable is installed on per edge over the shortest path from its hub to the landline if additional capacity is needed. This capacity installation guarantees the bandwidth requirement from hubs to the landline. Hence, the capacity installation from terminals to hubs and from hubs to the landline satisfies the throughput constraints. Note that the cost of capacity installation from terminals to the landline is dominated by their hop distances to the landline. In our solution to CND problem, we take advantage of the grouping scheme in \cite{hassin2004approximation} on the minimum cost Steiner tree, $G_{SteinerT\!C}$, determined by SteinerTC-ALGO. For each group, the vertex with shortest hop distance to the landline $LN$ is selected as a hub. 

For the capacity installation, we first install one wireless link per edge in $G_{SteinerT\!C}$. Second, to route the aggregated traffics of groups to $LN$, one additional wireless link, if necessary, is installed per edge on the paths from hubs to $LN$, which are formed by the corresponding edges in $G_{SteinerT\!C}$. In our solution, the route function $rt$ and the traffic flow function $f$ are determined by the grouping scheme and the terminals demands.

\subsection{Performance Analysis} 
In this section, we will derive the cost performance ratio of our approximation algorithm to MNO problem by analyzing the total cost of tower and wireless link installation ($\sum_{v\in A}{cT\!ower(h(v))}+\sum_{e\in E_{SteinerT\!C}}{cLink(\lceil \frac{f(e)}{U} \rceil)}$). In the following, let $cT\!ower(SteinerT\!C)$ and $cLink(C\!N\!D)$ respectively denote the tower cost and wireless link cost in our solution.

First, we analyze the cost of tower deployment. Let $cT\!ower(O\!P\!T)$ be the cost of towers in the optimal solution to SteinerTC problem. Suppose there exists a minimum cost spanning tree where all vertices are terminals. The tower cost of this spanning tree is denoted by $cT\!ower(M\!S\!T)$. Since our solution to SteinerTC problem exploits non-terminals with existing towers to further minimize the cost of towers, $cT\!ower(SteinerT\!C) \leq cT\!ower(M\!S\!T)$. In addition, since our solution to SteinerTC problem has the cost bound ratio of $2\ln{|A|}$ to the optimal solution, we get the equation:
\begin{equation}
\begin{array}{l}
\ \ \ cT\!ower(O\!P\!T) \leq cT\!ower(SteinerT\!C) \\
\leq 2\ln{|A|}cT\!ower(O\!P\!T)
\label{towereq}
\end{array}
\end{equation}

Second, we analyze the cost of wireless link installation. Let $cLink(O\!P\!T)$ be the cost of all wireless links in the optimal solution to CND problem. Since $cLink(C\!N\!D)$ consists of the cost of installed links from terminals to hubs and the cost of installed links from hubs to the landline, we use $cLink(C\!N\!D_{T2H})$ and $cLink(C\!N\!D_{H2L})$ to respectively denote the two costs. Let $cLink(M\!S\!T_{T2H})$ be the cost of link installation from terminals to hubs in MST where each edge connects two terminals and the number of edges, $|A|-1$, is minimal. Thus, we know $cLink(M\!S\!T_{T2H}) \leq cLink(O\!P\!T)$ and $cLink(M\!S\!T_{T2H}) \leq cLink(C\!N\!D_{T2H})$. Since the worst case of the number of edges in a Steiner tree is the number of all vertices minus one $(|A|+|B|-1)$, we get the worst-case bound of $cLink(C\!N\!D_{T2H})$:
\begin{equation*}
\begin{array}{l}
\ \ \ cLink(M\!S\!T_{T2H}) \leq cLink(C\!N\!D_{T2H}) \\
\leq \frac{|A|+|B|-1}{|A|-1} cLink(M\!S\!T_{T2H})
\label{linkT2Heq1}
\end{array}
\end{equation*}
Also, we infer the bound ratio of $cLink(C\!N\!D_{T2H})$ to $cLink(O\!P\!T)$ :
\begin{equation}
\begin{array}{l}
\ \ \ cLink(C\!N\!D_{T2H}) \leq \frac{|A|+|B|-1}{|A|-1} cLink(M\!S\!T_{T2H}) \\
\leq \frac{|A|+|B|-1}{|A|-1} cLink(O\!P\!T) \leq \frac{|A|+|B|}{|A|} cLink(O\!P\!T) \\
= (1 + \frac{|B|}{|A|})cLink(O\!P\!T) 
\label{linkT2Heq2}
\end{array}
\end{equation}

\begin{figure}[hbt!]
\centerline{\includegraphics[width=0.25\textwidth]{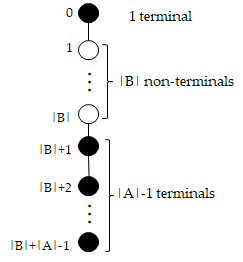}}
\caption{Worst-case topology of CND solution.}
\label{fig}
\end{figure}

Moreover, the upper bound of $cLink(C\!N\!D_{H2L})$ is obtained by considering the worst case of total hop distance from hubs to the landline. Let all terminals have the uniform demand and all wireless links have a uniform link capacity. Let $\gamma$ denote the ratio of $\frac{link\ capacity}{terminal\ demand}$. To calculate the lower bound of $cLink(O\!P\!T)$, we consider the best case of the topology, where all the vertices have 1-hop distance to the landline. In that case, a wireless link is installed on every edge from terminals to the landline. Thus, we have the best-case bound of $cLink(O\!P\!T)$, $link\ cost \times (|A|-1) \leq cLink(O\!P\!T)$. Now we calculate the worst case of the topology that is a long-chain topology where the 0th vertex is the landline (terminal), the 1st to the $(|B|)$-th vertices are non-terminals, and the $(|B|+1)$-th to $(|B|+|A|-1)$-th vertices are the remaining $|A|-1$ terminals. As Fig.~\ref{fig} depicted, the numbers labeled at left stand for the hop distances for the vertices to the landline. Note that the grouping scheme to CND problem separates all terminals vertex into $\frac{|A| \times terminal \ demand}{link\ capacity} = \frac{|A|}{\gamma}$ groups. For convenience, $\frac{|A|}{\gamma}$ is denoted as $m$. For each group, we select a hub with minimum hop distance to send the aggregated traffic equal to (or less than) the link capacity to the landline. Thus, in the 1st group containing terminals with the labels, $0, |B|+1, \dots, |B|+\gamma-1$, we select vertex 0 as a hub. In the 2nd group containing terminals with the labels, $|B|+\gamma, |B|+\gamma+1, \dots, |B|+2\gamma-1$, we select vertex $|B|+\gamma$ as a hub. In the last group containing terminals with the labels, $B+(m-1)\gamma, B+(m-1)\gamma+1, \dots, B+m\gamma-1)$, we select vertex $B+(m-1)\gamma$ as a hub. Now the total hop distance from hubs to the landline in the worst case can be presented as follows:
\begin{equation*}
\begin{array}{l}
\ \ \ 0 + (|B|+\gamma) + \dots + (|B|+(m-1)\gamma) \\
= \gamma [(\frac{|B|}{\gamma}+1) + \dots + (\frac{|B|}{\gamma}+m-1)] \\
= \gamma [(m-1)\frac{|B|}{\gamma} + (1 + \dots+ (m-1))] \\
= \gamma [(m-1)\frac{|B|}{\gamma} + \frac{m(m-1)}{2}] \\
= (m-1)(|B| + \frac{m\gamma}{2}) \\
= (\frac{|A|}{\gamma}-1)(|B| + \frac{|A|}{2}), m \leftarrow \frac{|A|}{\gamma} \\
\leq \frac{|A|}{\gamma}(|B| + \frac{|A|}{2}) = \frac{|A|^2 + 2|A||B|}{2\gamma}
\label{linkH2Leq1}
\end{array}
\end{equation*}
According to the best-case bound of $cLink(O\!P\!T)$ and the worst-case bound of $cLink(C\!N\!D_{H2L})$, we know $link\ cost \times (|A|-1) \leq cLink(O\!P\!T)$ and $link\ cost \times (|A|-1) \leq cLink(C\!N\!D_{H2L}) \leq link\ cost \times \frac{|A|^2 + 2|A||B|}{2\gamma}$.
Hence, the performance ratio of $cLink(C\!N\!D_{H2L})$ to $cLink(O\!P\!T)$ can be stated as follows: 
\begin{equation}
\begin{array}{l}
\ \ \ cLink(C\!N\!D_{H2L}) \leq \frac{|A|^2 + 2|A||B|}{2\gamma(|A|-1)}cLink(O\!P\!T) \\
\leq \frac{|A|^2 + 2|A||B|}{\gamma|A|}cLink(O\!P\!T) \leq \frac{|A| + 2|B|}{\gamma}cLink(O\!P\!T)
\label{linkH2Leq2}
\end{array}
\end{equation}
Note that $|A| - 1 \geq \frac{|A|}{2}$ due to $|A| \geq 2$.

Finally, we calculate the cost performance ratio of our approximation algorithm to MNO problem. Let $C_{O\!P\!T}$ be the total infrastructure cost of the optimal solution to MNO problem. Since the sum of the costs of the respective optimal solutions in SteinerTC and CND problems is the cost lower bound for $C_{O\!P\!T}$, we know $ cT\!ower(O\!P\!T) + cLink(O\!P\!T) \leq C_{O\!P\!T}$.
Let $C_{A\!L\!G\!O}$ be the total infrastructure cost of our solution to MNO problem, where $C_{A\!L\!G\!O} = cT\!ower(SteinerT\!C) + cLink(C\!N\!D_{T2H}) + cLink(C\!N\!D_{H2L})$. Now we are going to the discuss the three following cases of terminal demand and articulate their performance ratio.

\textit{Case 1}: The total demand from terminals is not larger than link capacity. In this case, $C_{A\!L\!G\!O} = cT\!ower(SteinerT\!C) + cLink(C\!N\!D_{T2H})$ since installation of a single wireless link on each edge from terminals to hubs is sufficient to convey the total traffic from all terminals. According to \eqref{towereq} and \eqref{linkT2Heq2}, we have 
\begin{equation*}
\begin{array}{l}
\ \ \ cT\!ower(SteinerT\!C) \\
\leq 2\ln{|A|}cT\!ower(O\!P\!T) \leq 2\ln{|A|}C_{O\!P\!T} 
\label{perfcase1eq1}
\end{array}
\end{equation*}
and
\begin{equation*}
\begin{array}{l}
\ \ \ cLink(C\!N\!D_{T2H}) \\
\leq (1 + \frac{|B|}{|A|})cLink(O\!P\!T) \leq (1 +\frac{|B|}{|A|})C_{O\!P\!T} 
\label{perfcase1eq2}
\end{array}
\end{equation*}
Hence, the cost performance ratio of case 1 is
\begin{equation}
\begin{array}{l}
\frac{C_{A\!L\!G\!O}}{C_{O\!P\!T}} \leq (1+2\ln{|A|}+ \frac{|B|}{|A|})
\label{perfcase1eq3}
\end{array}
\end{equation}

\textit{Case 2}: The total demand from terminals is larger than link capacity and the terminal demands are uniform. In this case, $C_{A\!L\!G\!O} = cT\!ower(SteinerT\!C) + cLink(C\!N\!D_{T2H}) + cLink(C\!N\!D_{H2L})$ According to \eqref{linkH2Leq2}, we know 
\begin{equation*}
\begin{array}{l}
\ \ \ cLink(C\!N\!D_{H2L}) \\
\leq \frac{|A| + 2|B|}{\gamma}cLink(O\!P\!T) \leq \frac{|A| + 2|B|}{\gamma}C_{O\!P\!T}
\label{perfcase2eq1}
\end{array}
\end{equation*}
Hence, the cost performance ratio of case 2 is
\begin{equation}
\begin{array}{l}
\ \ \ \frac{C_{A\!L\!G\!O}}{C_{O\!P\!T}} \leq (3+\ln{|A|}+ \frac{|B|}{|A|} + \frac{|A| + 2|B|}{\gamma}) \\
= O(\ln{|A|}+ \frac{|B|}{|A|} + \frac{|A| + |B|}{\gamma})
\label{perfcase2eq2}
\end{array}
\end{equation}

\textit{Case 3}: The total demand from terminals is larger than link capacity and the terminal demands are non-uniform. In this case, we can reuse the result from \eqref{perfcase2eq2} by letting $\gamma$ be $\frac{link\ capacity}{maximum\ demand}$.

\section{Cost Minimization Heuristic \\ for Hybrid Mesh Networks}
In this section, we present a heuristic for adding p2mp or omnidirectional hyperlinks, of say WiFi for p2mp or TVWS for omnidirectional, if and only if they reduce the total cost while preserving the satisfaction of throughput constraints. The algorithm iteratively finds a vertex that has some links with residual capacity and then replaces those links with a hyperlink while considering cost reduction. Below, we first articulate a concept of link replacement.  Second, we present a replacement scheme for the case of p2mp hyperlinks that substitute p2p links without changing the topology $G_{SteinerT\!C}$, tower heights, and the traffic routing.  Third, we introduce a replacement scheme for the case of omnidirectional hyperlink that substitute some p2p links and for altering the corresponding tower heights adapting to omnidirectional antennas and their subordinate directional antennas when that can offer cost reduction. Finally, we discuss the assignment of transit power in hybrid networks.

\subsection{Hyperlink Replacement Strategy}
The solution to the CND problem that we have adopted involves a partitioning wherein terminals are divided into groups whose aggregated traffics do not exceed the link capacity. This scheme results in links within groups that are not fully utilized. This underutilized capacity of wireless link is called {\em residual capacity}. For the traffic flow function $f$ returned from MNO problem, it computes the required traffic for each edge $e\! \in \! E_{SteinerT\!C}$. Let $U$ be the link capacity over its spectrum. We assume that a p2p link (which has two p2p antennas) and a p2mp hyperlink (which as one p2mp antenna and one or more p2p antennas) have the same capacity $U$, and yet an omnidirectional hyperlink (which as one omnidirectional antenna and one or more its subordinate directional antennas) can have different capacity $U_{O\!m\!n\!i}$. The residual capacity of edge $e$ is thus defined as $U \lceil \frac{f(e)}{U} \rceil – f(e)$. 

We now describe our cost minimization heuristic that replaces underutilized edges with p2mp and omnidirectional hyperlinks. Note that one edge is installed with one or more p2p links. The related notations are listed in Table~\ref{notationANT}. Recall that although the main goal of hyperlink replacement is to minimize the infrastructure cost in hybrid networks, the replacement must satisfy the throughput constraints. The heuristic does not change the MNO solution topology or the traffic routing, since changing the topology will alter the traffic flows and that may break the constraints that were 
guaranteed by solving CND. In other words, replacement with a p2mp / omnidirectional hyperlink only happens at ``candidate'' vertices with respect to some/all of their respective neighbors; it does not include any vertex which is at a distance of more than one hop from any candidate vertex. 

More specifically, given a candidate vertex $v$, the replacement only involves the edges connected to its children instead of the edge $(v, v.parent)$ that conveys the aggregated traffic of all its children. That is, we choose to not deploy a hyperlink that covers both $(v, v.parent)$ and the edges from $v$ to its children. The reasons for this are: First, for p2mp antennas, the replacement of $(v, v.parent)$ may undermine throughput performance because only one link in a p2mp hyperlink can transmit at a given time. Second, for omnidirectional links, such replacement occupies twice the capacity over the spectrum but does not effectively reduce the cost of tower height, which is the dominant factor for infrastructure cost. At $v$, we apply its height to $H\!T\!O\!m\!n\!i$, which may increase  cost ($h(v) < H\!T\!O\!m\!n\!i$) or reduce cost ($h(v) > H\!T\!O\!m\!n\!i$). Nevertheless, at vertex $v.parent$, we apply its height to $max(h(v.parent), H\!T\!O\!m\!n\!i\!S\!D)$ because $v.parent$ still has other induced edges(s). Note that $h(v)$ is likely larger than $H\!T\!O\!m\!n\!i\!S\!D$ (10m) and the available channels in the omnidirectional link's spectrum are limited and location-dependent. Therefore, allowing replacement only at children vertices offers the chance to replace more p2p links as well as reduce the corresponding tower heights.

\begin{table}[htbp]
\caption{Notations for Cost Minimization in Hybrid Networks}
\begin{center}
\begin{tabular}{c|c}
\hline
\textbf{Symbol}&\textbf{Meaning} \\
\hline
$U$ &p2p link / p2mp hyperlink capacity \\
\hline
$R_{M\!P}$&maximal transmit distance of $M\!P$ antenna, $R_{M\!P} \leq R$ \\
\hline
$B\!W$&tunable beamwidth of $M\!P$ antenna, $B\!W \leq B\!W\!M\!A\!X$ \\
\hline
$B\!W\!M\!A\!X$&maximal beamwidth of $M\!P$ antenna \\
\hline
$R\!A\!D_{M\!P}$&tunable radius of $M\!P$ antenna, $R\!A\!D_{M\!P} \leq R_{M\!P}$\\
\hline
$X$&set of vertices covered by a hyperlink \\
\hline
$cAntenna$& antenna cost function \\
&$cAntenna: \{P\!P, M\!P, O\!m\!n\!i, O\!m\!n\!i\!S\!D\} \rightarrow \mathbb{R}$ \\
\hline
$rng_{M\!P}$& MP range verification function \\
&$rng_{M\!P}: E_{SteinerT\!C} \!\times\! V_{SteinerT\!C} \!\times\! V_{SteinerT\!C}$ \\
&$\!\times \mathbb{R} \!\times\! \mathbb{R} \rightarrow \{0, \! 1\}$\\
\hline
$U_{O\!m\!n\!i}$ &omnidirectional hyperlink capacity \\
\hline
$R_{O\!m\!n\!i}$&maximal transmit distance of omnidirectional antenna \\
\hline
$R\!A\!D_{O\!m\!n\!i}$&
radius of omnidirectional antenna, $R\!A\!D_{O\!m\!n\!i} \leq R_{O\!m\!n\!i}$\\
\hline
$H\!T\!O\!m\!n\!i$&fixed tower height of omnidirectional antenna \\
\hline
$H\!T\!O\!m\!n\!i\!S\!D$&fixed tower height of subordinate directional antenna \\
&of omnidirectional hyperlink \\
\hline
$rng_{O\!m\!n\!i}$& omnidirectional range verification function \\
&$rng_{O\!m\!n\!i}: V_{SteinerT\!C} \!\times\! \mathbb{R} \!\times\! V_{SteinerT\!C} \!\times\! \mathbb{R}\rightarrow \{0, \! 1\}$\\
\hline
$V_{O\!m\!n\!i}$& set of vertices with  omnidirectional antennas\\
\hline
$V_{O\!m\!n\!i\!S\!D}(v)$& set of vertices only covered by \\
&the omnidirectional hyperlink induced from $v$\\
\hline
\end{tabular}
\label{notationANT}
\end{center}
\end{table}

\subsection{MP Deployment}
For a p2mp hyperlink, we use one p2mp antenna to share the capacity $U$ among its corresponding p2p antennas. A p2mp antenna is modeled by three variables: direction, beamwidth, and radius. We denote, for a p2mp antenna is deployed at $v$, the direction of the antenna pointed towards vertex $u$ as $\overrightarrow{\rm vu}$. We assume the p2mp antenna is flexible so as to configure its beamwidth and transmit radius so that one p2mp antenna can be set to different sector ranges; this is typical of today's p2mp antennas.

For MP deployment, there are four constraints that must be guaranteed if and only if a p2mp hyperlink is deployed at $v$ to replace p2p links associated with the edges $(v, x), x \in X$, where $X$ denotes the set of selected children of $v$.
\begin{equation*}
\begin{array}{l}
Cost: cAntenna(M\!P) < cAntenna(P\!P) |X| \\
Capacity: \sum_{x\in X}{f((v, x))} \leq U \\
Range: \prod_{(v, x)\in E_{SteinerT\!C}, x\in X}{rng_{M\!P}((v, x), v, u, B\!W,} \\ 
\ \ \ \ \ \ \ \ \ \  R\!A\!D_{M\!P}) = 1 \\
Inter\!f\!erence: \sum_{e \in E_{SteinerT\!C} \setminus \{(v, x) | x\in X\}}{rng_{M\!P}(e, v, u,} \\
\ \ \ \ \ \ \ \ \ \ \ \ \ \ \ \ \ \ B\!W, R\!A\!D_{M\!P}) = 0
\label{constraintCost}
\end{array}
\end{equation*}
The cost and capacity constraints ensure that the replacement reduces the total cost and satisfies the throughput demand. 
The range verification function $rng_{M\!P}$ examines if the edge segment $(v, x)$ is covered by the p2mp antenna at $v$ and pointed towards $u$, with the beamwidth $B\!W$ and radius $R\!A\!D_{M\!P}$. The interference constraint guarantees that the deployed p2mp antenna does not affect any edge other than $(v, x), x\! \in \! X$.

We can now state the cost minimization problem of hybrid MP deployment as follows\footnote{Note that we can maximize the cost reduction by maximizing the number of replaced p2p links. The problem can be considered a maximum coverage problem \cite{cohen2008generalized}.}:
\begin{quote}
\textit{\textbf{Input}. $G_{SteinerT\!C} = (V_{SteinerT\!C}, E_{SteinerT\!C})$, the corresponding traffic flows on edges in $E_{SteinerT\!C}$.}

\textit{\textbf{Output}. A set of vertices where p2mp antennas are deployed, along with direction, beamwidth, and radius assignments such that all 4 constraints are satisfied and the cost of wireless links is minimized.}
\end{quote}
The solution in Algorithm~\ref{MP-DEPLOY-ALGO} considers candidate vertices (and their respective children) in a bottom-up fashion. For each candidate vertex it performs local MP deployment with the greedy Algorithm~\ref{MP-ANT-REPLACE} that selects the antenna configuration (direction, beamwidth, radius) containing a largest number of uncovered vertices at each iteration.
\begin{algorithm}
\SetAlgoLined
Input: $G_{SteinerT\!C} = (V_{SteinerT\!C}, E_{SteinerT\!C})$, traffic flow function $f$\\
Output: p2mp hyperlink function $linkset_{M\!P}$\\
\For{$v \in V_{SteinerT\!C}$}{
$linkset_{M\!P}$ := $\{ \}$\;
}
$R := V_{SteinerT\!C} \setminus \{ leaves \}$\;
\While{$R \neq \{\}$}{
pick a vertex $v \in R$ such that $v.depth$ is maximum\;
$linkset_{M\!P}(v)$ := MP-ANT-REPLACE($G_{SteinerT\!C}$, $f$, $v$)\;
$R$ := $R \setminus \{v\}$\;
}
\Return $linkset_{M\!P}$\;
\caption{MP-DEPLOY-ALGO}
\label{MP-DEPLOY-ALGO}
\end{algorithm}
\begin{algorithm}
\SetAlgoLined
Input: $G_{SteinerT\!C} = (V_{SteinerT\!C}, E_{SteinerT\!C})$, traffic flow function $f$, candidate vertex $v$\\
Output: deployed p2mp antenna set of $v$, $linkset_{M\!P}(v)$ = \{($u_1$, $B\!W_1$, $R\!A\!D_{M\!P1}$), $\dots$\}\\
$S$ := \{$u | u \in v.children$\}; $end$ := $false$\;
\While{$S \neq \{\} \wedge end \neq true$}{
$index$ := 0\;
\For{$u \in S$}{
$B\!W$ := $B\!W\!M\!A\!X$; $R\!A\!D_{M\!P}$ := $R_{M\!P}$;
$X$ := \{$x | x \in v.children \wedge  rng_{M\!P}((v, x), v, u, B\!W, R\!A\!D_{M\!P}) = 1$\}\; 
\While{\big($\frac{cAntenna(M\!P)}{cAntenna(P\!P)} \! < \! |X| \; \wedge \;  (\sum_{x \in X}{f((v, x))} \! \leq \! U)$ $\wedge$ $\; \! (\!\sum_{e \in \! (\!E_{SteinerT\!C}\! \setminus \! \{\!(v, x) | x\in \! X\!\}}\!\!{rng_{\!M\!P}\!(e,\!v,\!u,\!B\!W\!\!,\!R\!A\!D_{\!M\!P}\!)\!\!=\!\!0}\!)\! $\big)
$\neq true \; \; \wedge \; \; |X| > 1$}{
decrease $B\!W$ to the closest vertex $x \in X$ (ignore any vertex at the boundary)\;
$X$ := \{$x | x \in v.children \; \wedge \;  rng_{M\!P}((v, x), v, u, B\!W, R\!A\!D_{M\!P}) = 1$\}\; 
decrease $R\!A\!D_{M\!P}$ to the vertex $x \in X$ s.t. $dist(v, x)$ is maximum\;
}
\If{$|X| > 1$}{
$L[index]$ := $(u, B\!W, R\!A\!D_{M\!P})$\;
$index$ := $index$ + 1\;
}
}
\eIf{index = 0}{
$end := true$\;
}{
$cover_{best} := 0$\;
\For{$i \gets 0, \dots, index-1$}{
$cover$ := $\#\{x | x \in v.children \; \wedge \; rng_{M\!P}((v, x), v, L[i].direction,$
$L[i].B\!W, L[i].R\!A\!D_{M\!P}) = 1\}$\;
\If{$cover > cover_{best}$}{
$cover_{best} := cover$\;
select := i\;
}
}
$linkset_{M\!P}(v) := linkset_{M\!P}(v) \cup L[select]$\;
remove the set of vertices $\{x | x \in v.children \; \wedge \; rng_{M\!P}((v, x), v, L[select].direction,$
$L[select].B\!W, L[select].R\!A\!D_{M\!P}) = 1\}$ from $S$\;
}
}
\Return $linkset_{M\!P}(v)$\;
\caption{MP-ANT-REPLACE}
\label{MP-ANT-REPLACE}
\end{algorithm}

In MP-ANT-REPLACE, the while loop from line 4 to 32 finds the p2mp antenna with a configuration (direction, beamwidth, radius) covering a maximum number of uncovered vertices at each iteration. The for loop from line 6 to 17 tries to assign the antenna direction to every vertex $u$ in the available children and then gradually decreases its beamwidth and radius until the constraints (cost, capacity, and interference) are satisfied. The tuning of antenna configuration is presented in the while loop from line 8 to 12. Let $L$ be the candidate list of the available antenna configurations to be deployed in each iteration, as assigned from line 13 to 16. After $L$ is computed, from line 18 to 31 we choose the one that covers the maximum vertices. If there is no available candidate found, from line line 18 to 19, we end the replacement at $v$. Otherwise, in the for loop from line 22 to 28 we find the candidate with the best coverage. Then we adopt this antenna configuration and remove the vertices covered by it. Note that this iteration (loop from line 4 to 32) ends at line 8, when there is no configuration satisfying the constraints. Finally, the set of selected antenna configurations is returned as $ linkset_{M\!P}(v)$.

\subsection{Omnindirectional Antenna Deployment}
For an omnidirectional hyperlink, we use one omnidirectional antenna to share its capacity, $U_{O\!m\!n\!i}$, with its corresponding subordinate directional antennas. 
Unlike p2mp antennas, an omnidirectional antenna covers its subordinate directional antennas according to its omnidirectional coverage shaped by $R_{O\!m\!n\!i}$\footnote{$R_{O\!m\!n\!i}$ varies with the coding schemes (i.e., 64QAM, 16QAM, QPSK) \cite{stevenson2009ieee}. To support broadband services, we select the one that achieves the best capacity of $U_{O\!m\!n\!i}$.}. We assume that the spectrum of the omnidirectional hyperlinks does not overlap with that of the p2p/p2mp links, and also that these links are capable of non-line-of-sight propagation; these assumptions are again common; i,e, if we use TVWS omnidirectional hyperlinks with WiFi p2p/p2mp links. The implication of the latter assumption in particular is that omnidirectional antenna and their subordinate directional antennas are mounted at a fixed height ($H\!T\!O\!m\!n\!i$ and $H\!T\!O\!m\!n\!i\!S\!D$) for non-line-of-sight propagation so the needed tower heights to omnidirectional hyperlinks are considerably shorter.

For Omnidirectional deployment, four constraints must be guaranteed if and only if an omnidirectional hyperlink at a candidate vertex $v$ is going to replace WiFi p2p links associated with the edges $(v, x), x \in X$:
\begin{equation*}
\begin{array}{l}
Cost: cAntenna(O\!m\!n\!i) + cAntenna(O\!m\!n\!i\!S\!D) |X| \\
\ \ \ \ \ \ \ \ + cT\!ower(\max(H\!T\!O\!m\!n\!i, h(v))) \\
\ \ \ \ \ \ \ \ + \sum_{u \in V_{O\!m\!n\!i\!S\!D}}{cT\!ower(H\!T\!O\!m\!n\!i\!S\!D)} \\
\ \ \ \ \ \ \ \ < 2 \times cAntenna(P\!P) |X| + cT\!ower(h(v)) \\
\ \ \ \ \ \ \ \ + \sum_{u \in V_{O\!m\!n\!i\!S\!D}}{cT\!ower(h(u))}\\
Capacity: \sum_{x\in X}{f((v, x))} \leq U_{O\!m\!n\!i} \\
Range: dist(v, x) \leq R_{O\!m\!n\!i}, (v, x) \in E_{SteinerT\!C}, x \in X \\
Inter\!f\!erence: \sum_{u \in V_{O\!m\!n\!i}}{rng_{O\!m\!n\!i}(u, u.R\!A\!D_{O\!m\!n\!i}, v,} \\
\ \ \ \ \ \ \ \ \ \ \ \ \ \ \ \ \ \ R\!A\!D_{O\!m\!n\!i}) = 0
\label{constraintCost}
\end{array}
\end{equation*}
Since Omnidirectional deployment as formulated not only affects the antenna cost but also changes the tower heights associated with the hyperlink, the cost constraint guarantees that the total infrastructure cost consisting of link installation and tower deployment must be reduced. The capacity and range constraints ensure that the terminal demands are still satisfied and the corresponding vertices are covered by the omnidirectional antenna. The interference constraint guarantees that the deployed omnidirectional antenna at the candidate vertex $v$ does not affect any other omnidirectional hyperlink. The range verification function $rng_{O\!m\!n\!i}$ examines if the candidate vertex $v$ with radius $R\!A\!D_{O\!m\!n\!i}$ overlaps any other omnidirectional antenna at $u$ with its radius $u.R\!A\!D_{O\!m\!n\!i}$, $u\! \in \! V_{O\!m\!n\!i}$.

We can now state the cost minimization problem of hybrid Omnidirectional deployment as follows:
\begin{quote}
\textit{\textbf{Input}. $G_{SteinerT\!C} = (V_{SteinerT\!C}, E_{SteinerT\!C})$, the corresponding traffic flows on edges in $E_{SteinerT\!C}$, the tower height at vertices in $V_{SteinerT\!C}$.}

\textit{\textbf{Output}. A set of omnidirectional antennas with the covered vertices deployed subordinate directional antennas such that the four constraints are satisfied and the cost of towers and wireless links is minimized.}
\end{quote}
The solution in Algorithm~\ref{OMNI-DEPLOY-ALGO} also considers each candidate vertex in a bottom up manner and tries to deploy an omnidirectional antenna for its children using Algorithm~\ref{OMNI-ANT-REPLACE}. 
The algorithm maximizes the cost reduction by maximizing the covered vertices whose tower heights are changed to $H\!T\!O\!m\!n\!i\!S\!D$. Recall that it is profitable to change a tower height to $H\!T\!O\!m\!n\!i\!S\!D$ due to low cost (i.e., only \$100 for 10m). 

\begin{algorithm}
\SetAlgoLined
Input: $G_{SteinerT\!C} = (V_{SteinerT\!C}, E_{SteinerT\!C})$, traffic flow function $f$, height function $h$\\
Output: omnidirectional hyperlink function $linkset_{O\!m\!n\!i}$\\
$V_{O\!m\!n\!i} := \{\}$\;
\For{$v \in V_{SteinerT\!C}$}{
$linkset_{O\!m\!n\!i}$ := $\{ \}$\;
}
$R := V_{SteinerT\!C} \setminus \{ leaves \}$\;
\While{$R \neq \{\}$}{
pick a vertex $v \in R$ such that $v.depth$ is maximum\;
$linkset_{O\!m\!n\!i}(v)$ := OMNI-ANT-REPLACE($G_{SteinerT\!C}$, $f$, $h$, $V_{O\!m\!n\!i}$, $v$)\;
\If{ $linkset_{O\!m\!n\!i}(v) \neq \{ \}$}{
$V_{O\!m\!n\!i} := V_{O\!m\!n\!i} \cup {v}$\;
}
$R$ := $R \setminus \{v\}$\;
}
\Return $linkset_{O\!m\!n\!i}$ \;
\caption{OMNI-DEPLOY-ALGO}
\label{OMNI-DEPLOY-ALGO}
\end{algorithm}

\begin{algorithm}
\SetAlgoLined
Input: $G_{SteinerT\!C} = (V_{SteinerT\!C}, E_{SteinerT\!C})$, traffic flow function $f$, height function $h$, omnidirectional antenna set $V_{O\!m\!n\!i}$, candidate vertex $v$\\
Output: deployed subordinate directional antennas of $v$, $linkset_{O\!m\!n\!i}(v) = \{u_1, \dots, u_{|X|}\}$\\
$X$ := \{$x | x \in v.children \wedge dist(v, x) \leq R_{O\!m\!n\!i}$\}\; 
$V_{O\!m\!n\!i\!S\!D}$ := \{$x | x \in X$, $(v, x)$ is the only induced edge from $x$\}\;
\While{\big($cAntenna(O\!m\!n\!i) + cAntenna(O\!m\!n\!i\!S\!D) |X| + cT\!ower(\max(H\!T\!O\!m\!n\!i, h(v))) + \sum_{u \in V_{O\!m\!n\!i\!S\!D}}{cT\!ower(H\!T\!O\!m\!n\!i\!S\!D)} < 2 \times cAntenna(P\!P) |X| +cT\!ower(h(v)) + \sum_{u \in V_{O\!m\!n\!i\!S\!D}}{cT\!ower(h(u))} \; \wedge \; \sum_{x \in X}{f((v, x))} \leq U_{O\!m\!n\!i}$\big) $\neq true \; \wedge \; |X| > 0$}{

\eIf{$\sum_{u \in V_{O\!m\!n\!i}}{rng_{O\!m\!n\!i}(u, u.R\!A\!D_{O\!m\!n\!i}, v, R\!A\!D_{O\!m\!n\!i}) = 0}$}{
\eIf{$|X \setminus V_{O\!m\!n\!i\!S\!D}| > 0$}{
remove $x \in X \setminus V_{O\!m\!n\!i\!S\!D}$ s.t. $f(v, x)$ is maximum, from $X$\;
}{
remove $x \in X$ s.t. $f(v, x)$ is maximum, from $X$\;
}
}{
remove $x \in X$ s.t. $dist(v, x)$ is maximum, from $X$\; 
decrease $R\!A\!D_{O\!m\!n\!i}$ to the vertex $x \in X$ s.t. $dist(v, x)$ is maximum\;
}
$V_{O\!m\!n\!i\!S\!D}$ := \{$x | x \in X$, $(v, x)$ is the only induced edge from $x$\}\;
}
\eIf{$|X| > 0$}{
 $linkset_{O\!m\!n\!i}(v)$ = X\;
}{
 $linkset_{O\!m\!n\!i}(v)$ = \{\}\;
}

\Return $linkset_{O\!m\!n\!i}(v)$\;
\caption{OMNI-ANT-REPLACE}
\label{OMNI-ANT-REPLACE}
\end{algorithm}

In OMNI-ANT-REPLACE, the while loop from line 5 to 17 finds a maximum set of vertices where we are going to substitute with subordinate directional antennas. A way to trim the set to satisfy the cost and capacity constraints is shown from line 6 to 15. First of all, line 6 verifies if the configuration of the candidate vertex $v$ and radius $R\!A\!D_{O\!m\!n\!i}$ has no overlap with any other deployed omnidirectional antenna. If there is no overlap, then line 7 examines whether any vertex not in $V_{O\!m\!n\!i\!S\!D}$ exists. We then remove the one with maximum flow at line 8. Otherwise, at line 10 we remove the one in $V_{O\!m\!n\!i\!S\!D}$ with maximum flow. This enhances the chance to satisfy capacity constraint as well as remain more residual capacity. If there exists overlap between the candidate vertex $v$ and any other omnidirectional antenna, from line 13 to 14, we remove the vertex in $X$ with maximum distance to $v$ and then retune the radius $R\!A\!D_{O\!m\!n\!i}$. After that, we determine from line 18 to 19, the covered vertices if any set $X$ found to satisfy the constraints. The set of covered vertices is then returned as $linkset_{O\!m\!n\!i}(v)$.

\subsection{Power Assignment and Interference Avoidance}
In a homogeneous network (i.e., with only p2p links), the antenna transmit power varies according to the link distance. To mitigate the interference between any two p2p links, power selection is possible using, say, the 
2P MAC protocol \cite{raman2005design} that solves interference issues more efficiently than CSMA/CA for long-distance links. In hybrid network, in addition, our algorithm for MP / Omnidirectional deployment considers the interference constraint by tuning the transmit radius (and beamwidth of p2mp antennas) to the farthest covered vertex. Hence, the transit power of antennas is assigned during the replacement and cause no additional interference.

\section{Conclusions and Future Work}
Terrestrial wireless meshes of long-distance links offer an important alternative to address the digital divide for rural areas.  Adoption of new technologies for high throughput, low latency modalities, such as optical wireless, can further improve their viability. We have focused on the middle-mile network optimization aspect of their sustainable development. We have proposed the first MNO solution that works in polynomial time.  Our solution has an approximation ratio of $O(\ln{|A|}+\frac{|B|}{|A|}+\frac{|A|+|B|}{\gamma})$ in tower and antenna cost. We also introduced the SteinerTC problem to ease the MNO problem by determining a network topology covered by minimal tower heights that assure line-of-sight links. Our MNO solution applies to hybrid networks, allowing lower-cost hyperlinks to share capacity and eliminate underutilized capacity.


Our future work includes evaluation of our solution in targeted rural areas of southern Ohio in partnership with local wireless network providers. This will include cost-analysis for likely scenarios of network growth that build on capacity flexibility offered by our solution.


\bibliographystyle{IEEEtran}
\bibliography{myref}

\appendices

\section{Approximation Ratio of SteinerTC Solution}
\label{FirstAppendix}

\begin{lemma} 
Given a height increment $\delta$ at $v$, STAR-SteinerTC-ALGO returns height increments of $v$'s logical neighbors such that the cost-to-benefit ratio of $v$ is minimized. 
\end{lemma}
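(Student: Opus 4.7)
The plan is to show that, once $v$ and $\delta$ are fixed, the cost-to-benefit ratio depends on exactly three kinds of choices that STAR-SteinerTC-ALGO optimizes layer by layer: the per-neighbor height increment, the deduplication of within-component logical neighbors, and the size of the selected sorted prefix. I would verify optimality at each layer and assemble them.

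First, I would pin down what is free to choose. Since $h(v)+\delta$ and the fixed heights of all non-terminals in $B$ are given, the numerator of the ratio equals $cTower^+(h(v)) + \sum_{u \in S} cTower^+(h(u))$, where $S$ is the subset of logical neighbors selected and $h^+(u)$ is the chosen increment at each $u \in S$ realizing line-of-sight. The denominator equals the number of components of $cover(h)$ that get absorbed into $v$'s component by this assignment. Thus, feasibility and optimality depend only on (i) which logical neighbors are chosen, (ii) the increment $h^+(u)$ at each chosen $u$, and (iii) the path from $v$ to $u$ used when $(u,v) \notin E$.

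Second, I would show that the per-neighbor computation in lines~5--13 yields the pointwise minimum $h^+(u)$. In the direct-edge case this is the smallest $\beta$ covering $(u,v)$, hence minimal by construction. Otherwise, every interior vertex of any candidate path $(v, v_1, \ldots, v_r, u)$ is a non-terminal with fixed height, so only the two endpoint heights are variable on the path; line~10 searches among all such non-terminal-only paths for the one of minimum cost increment, and line~11 then sets $\beta$ to the smallest value covering the last hop, giving the minimum $h^+(u)$ that connects $u$ to $v$.

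Third, I would handle the selection layer. The deduplication in lines~15--17 keeps at most one representative per component in $cover(h)$, so each element of the surviving list $L$ reduces the number of components by exactly one when added. Hence any $k$-subset of $L$ yields benefit exactly $k$, and for a fixed $k$ the numerator is minimized by choosing the $k$ elements of smallest $cTower^+$ value, which is precisely the length-$k$ prefix of $L$ under its ascending sort. The loop on lines~19--24 enumerates every $k \in \{1, \dots, |L|\}$ and records the $k_{best}$ achieving the smallest ratio, so the global optimum is attained.

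The main obstacle lies in step three: one must rule out that some non-prefix or cross-component selection could beat the chosen prefix. The argument rests on two elementary observations. First, if two logical neighbors lie in the same component of $cover(h)$, taking both is wasteful because the first already merges that component with $v$'s, so we lose no optimality by keeping only the cheaper representative — justifying the deduplication. Second, minimizing $(C + \sum_{i \in S} c_i)/|S|$ over $k$-subsets of a fixed multiset is achieved by the $k$ smallest $c_i$'s, so prefix selection is optimal at each fixed $k$. Together with exhaustive enumeration over $k$, these close the proof.
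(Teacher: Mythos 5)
Your proof is correct and follows essentially the same route as the paper's: keep only the cheapest logical neighbor per component (so any candidate selection can be replaced, without worsening the ratio, by a subset of the deduplicated list $L$), observe that for each fixed $k$ the sorted length-$k$ prefix minimizes the numerator while the benefit is exactly $k$, and let the exhaustive loop over $k$ pick the best ratio. The differences are presentational: you argue optimality directly, adding an explicit (and welcome) check that lines 5--13 compute the pointwise-minimal increments and minimum-cost relay paths, whereas the paper reaches the same conclusion via a three-case exchange/contradiction argument comparing the algorithm's selection $K$ against an optimal selection $J$.
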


\begin{proof}
Let $incr_{O\!P\!T}$ be the increment function with height increment $\delta$ at $v$ and the height increments of some of the logical neighbors of $v$. Let $J$ be the set of chosen logical neighbors of $v$ in $incr_{O\!P\!T}$ such that the cost-to-benefit ratio at $v$ is optimal. Note that the benefit of $incr_{O\!P\!T}$ at $v$ is $|J|$ that minimized the cost-to-benefit ratio. If there exists more than one vertex in $J$ in the same component $D_i$, the benefit of $incr_{O\!P\!T}$ at $v$ is less than $J$. Then we can remove one vertex in $J \cap D_i$ to obtain a lower cost-to-benefit ratio at $v$. This contradicts the definition of $incr_{O\!P\!T}$. 

Let $incr_{A\!L\!G\!O}$ be the increment function returned by STAR-SteinerTC-ALGO and $K$ be the set of chosen logical neighbors of $v$ in $incr_{A\!L\!G\!O}$. The benefit of $incr_{O\!P\!T}$ at $v$ is $|K|$ since line 16 in Algorithm~\ref{STAR-SteinerTC-ALGO} remains only one vertex in each component. Now we will compare the selection of $J$ and $K$. First, suppose $|K|$ is equal to $|J|$. Note that, for each component having at least one logical neighbor of $v$, STAR-SteinerTC-ALGO only remains one vertex with lowest cost increment. $incr_{A\!L\!G\!O}$ also considers the same set of vertices, denoted as $L$, because choosing another vertex in the component makes higher cost increment but identical benefit. Since the for loop from line 19 to 24 chooses first $|K|$ elements in $L$ with ascending order of cost increment, the sum of cost increments with size $|K|$ is minimized. Thus, $incr_{A\!L\!G\!O}$, equal to $incr_{O\!P\!T}$, is optimal.

Second, suppose $|K|$ is less than $|J|$. Since $K$ is the first $|K|$ elements in $L$ and $K$ is the set with minimal cost increment in size $|K|$, $J \cap K = K$ and $J \setminus K$ contains at least one vertex $u$ not in the components of vertices in $K$. Note that vertex $u$ is the one with minimal cost increment in its component, thus $u$ is in $L$. However, the for loop from line 19 to 24 in Algorithm~\ref{STAR-SteinerTC-ALGO} considers all sizes from 1 to $|L|$ to determine the lowest cost-to-benefit ratio. $u$ is selected in $incr_{A\!L\!G\!O}$ if and only if $u$ and the vertices with lower cost increments than $u$ in $L$ make lowest cost-to-benefit ratio. This conflicts with the assumption. Hence, $|K|$ must be identical to $|J|$. According to the first statement, $incr_{A\!L\!G\!O}$ is optimal.

Finally, suppose $|K|$ is larger than $|J|$. Since $J \cap K = J$, we can obtain a lower cost-to-benefit ratio by removing $L[|J|+1], \dots, L[|K|]$ from $K$. However, the set $L[1 \dots |J|]$ is considered by STAR-SteinerTC-ALGO and chosen as $incr_{A\!L\!G\!O}$ if and only if $L[1 \dots |J|]$ makes lowest cost-to-benefit ratio. This conflicted with the assumption. Hence, $|K|$ must be identical to $|J|$ and thus $incr_{A\!L\!G\!O}$ is optimal.
\end{proof}

\begin{lemma} 
The cost-to-benefit ratio of the height increment selected in each iteration of SteinerTC-ALGO is at most twice the cost-to-benefit ratio of any height increment centered at $v$.
\label{SteinerTC2approximation}
\end{lemma}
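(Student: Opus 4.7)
The plan is to exploit the doubling-search enumeration of candidate height increments in SteinerTC-ALGO together with the per-pair optimality of STAR-SteinerTC-ALGO established in the preceding lemma. Fix a terminal $v$ and an arbitrary height increment $\delta^{*}$ at $v$ with cost-to-benefit ratio $r^{*} = c^{*}/b^{*}$, where $c^{*}$ is the minimum total cost attainable when using $\delta^{*}$ at $v$ (tower-cost increment at $v$ plus the best accompanying cost of logical-neighbor increments) and $b^{*}$ is the corresponding number of components merged.

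Because SteinerTC-ALGO enumerates increments at each terminal along a geometric (doubling) sequence, there exists a considered increment $\delta$ with $\delta^{*} \le \delta < 2\delta^{*}$. I would then establish two monotonicity facts. First, $b(\delta) \ge b^{*}$: every logical neighbor made reachable at $\delta^{*}$ remains reachable at the larger increment $\delta$. Second, the accompanying minimum logical-neighbor cost at $\delta$ is no larger than at $\delta^{*}$, since raising $v$'s height only relaxes the sufficient heights required at neighbors to cover their connecting edges or covered paths through non-terminals. Combined with the standard tower-cost regime in which doubling the height increment at most doubles the associated cost increment, these give $c(\delta) \le 2 c^{*}$.

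Finally, I would invoke Lemma 1: for the considered pair $(v, \delta)$, STAR-SteinerTC-ALGO returns a logical-neighbor set whose resulting cost-to-benefit ratio is the minimum possible for that pair; in particular it is bounded by $c(\delta)/b(\delta)$. Hence
\begin{equation*}
r(v,\delta) \;\le\; \frac{c(\delta)}{b(\delta)} \;\le\; \frac{2c^{*}}{b^{*}} \;=\; 2 r^{*}.
\end{equation*}
Since the outer loop of SteinerTC-ALGO selects the pair $(v,\delta)$ minimizing the returned ratio across all terminals and all enumerated increments, the ratio of the selected increment is at most $2 r^{*}$, which is the claim.

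The main obstacle I anticipate is the cost-doubling step. The tower-cost portion (doubling the height increment at most doubles the cost at $v$) is the clean part; the subtlety lies in controlling the neighbor-side contribution to $c(\delta)$, because the neighbor set realizing $c^{*}$ at $\delta^{*}$ need not be the set STAR-SteinerTC-ALGO selects at $\delta$. Lemma 1 dispatches this cleanly: it suffices to upper-bound $c(\delta)$ by evaluating the $\delta^{*}$-optimal neighbor set at the larger height $\delta$, and monotonicity of the minimum sufficient neighbor heights in $v$'s height then yields the required bound.
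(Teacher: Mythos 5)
Your overall skeleton---use the discretized search at $v$ to exhibit a considered increment whose ratio is within a factor $2$ of the arbitrary increment's ratio, combine benefit monotonicity with the local optimality of STAR-SteinerTC-ALGO from the preceding lemma, and let the outer greedy minimization over all pairs finish---is exactly the argument the paper relies on; the paper itself merely notes that the main loop is unchanged from TC-ALGO and that STAR-SteinerTC-ALGO is locally optimal, and imports the factor-$2$ bound from \cite{panigrahi2008minimum}. However, your execution of the factor-$2$ step has a genuine gap: you place the doubling grid on the \emph{height} increments ($\delta^{*} \le \delta < 2\delta^{*}$) and then invoke a ``standard tower-cost regime in which doubling the height increment at most doubles the associated cost increment.'' Nothing in the model grants this: $cT\!ower$ is an arbitrary function of height, and realistic tower costs are superlinear (the paper quotes roughly \$100 for 10m versus \$5000 for 45m), so $cT\!ower(h(v)+\delta)-cT\!ower(h(v))$ can far exceed $2\bigl(cT\!ower(h(v)+\delta^{*})-cT\!ower(h(v))\bigr)$ even with $\delta < 2\delta^{*}$. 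With a convex cost curve your bound $c(\delta) \le 2c^{*}$, and hence the final $2r^{*}$ conclusion, fails.

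The repair is what TC-ALGO actually does, and what the paper means by ``height increments associated with cost increments, using a doubling search'': the geometric grid lives in \emph{cost} space, not height space. Among the enumerated candidates at $v$ there is one whose cost increment at $v$ is at most twice the $v$-side cost of the arbitrary increment and whose height is at least $\delta^{*}$. Your two monotonicity observations then apply verbatim---the larger height at $v$ keeps every logical neighbor of the $\delta^{*}$-solution reachable and only relaxes the neighbors' minimum sufficient heights---so evaluating the $\delta^{*}$-optimal neighbor set at that candidate costs at most $2c^{*}$ while still merging $b^{*}$ components, and Lemma 1 guarantees STAR-SteinerTC-ALGO returns a ratio no worse than this candidate's. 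That yields the factor $2$ with no assumption whatsoever on the shape of the tower-cost function; with the doubling done in height space, the lemma is simply not provable for general $cT\!ower$.
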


\begin{proof}
Since SteinerTC-ALGO remains the same as TC-ALGO and STAR-SteinerTC-ALGO returns local optimum, it also achieves a 2-approximation of the optimal height increment centered at $v$ as TC-ALGO does, which is proved in \cite{panigrahi2008minimum}.
\end{proof}

\begin{definition}
A spider is a tree containing at least two leaves and having disjoint paths from the root to all leaves. There is at most one vertex with degree larger than 2 in a spider. A root of a spider is a vertex with edge-disjoint paths to all leaves of the spider. The root is unique if and only if a spider contains more than two leaves.
\end{definition}

\begin{definition}
Let $G = (V, E)$ be an undirected connected graph and let $S$ be a subset of $V$. A spider decomposition $S\!D(G, S)$ is a set of vertex-disjoint spiders in G such that the union of leaves of spiders in $S\!D(G, S)$ contains all vertices in S.
\end{definition}

\begin{lemma}  \cite{klein1995nearly}.~~~Let $G = (V, E)$ be an undirected connected graph and let $S$ be a subset of $V$, where $|S| \geq 2$. There exist a spider decomposition $S\!D(G, S)$ in G. 
\label{SpiderDecomp} 
\end{lemma}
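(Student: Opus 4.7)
The plan is to prove the Spider Decomposition Lemma by strong induction on $|S|$. The base case $|S|=2$ is immediate: since $G$ is connected, any simple path in $G$ between the two terminals of $S$ is itself a spider (a path has exactly two leaves, and trivially satisfies the ``at most one high-degree vertex'' condition), so it constitutes a valid single-spider decomposition $SD(G,S)$.

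For the inductive step with $|S|\ge 3$, I would first extract a single spider whose leaves lie in $S$, then apply the inductive hypothesis to what remains. Specifically, fix any spanning tree $T$ of $G$ and root it at an arbitrary vertex $r$. Scan $T$ bottom-up and let $u$ be the \emph{deepest} vertex such that the subtree $T_u$ rooted at $u$ contains at least two terminals of $S$ whose tree-paths to $u$ are internally vertex-disjoint (equivalently, terminals that lie in distinct child-subtrees of $u$, or terminals one of which is $u$ itself). Such a $u$ must exist because the root $r$ is a candidate whenever $|S|\ge 2$. From $u$, take the internally disjoint $T_u$-paths to each of the terminals of $S$ that lie in $T_u$; together with $u$ they form a spider $P$ with at least two leaves, all in $S$.

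Now set $S' := S \setminus \mathrm{leaves}(P)$, delete from $G$ the interior path-vertices of $P$ and the leaves of $P$ (but retain $u$ only if it is not already a terminal-leaf), and apply the inductive hypothesis to the resulting graph $G'$ with terminal set $S'$ whenever $|S'|\ge 2$. The key structural claim is that $G'$ remains connected along the portion of $T$ needed to cover the remaining terminals: since $u$ was chosen as the \emph{deepest} qualifying vertex, all removed interior vertices lie strictly inside $T_u$ and therefore do not disconnect any $T$-path joining the remaining terminals to higher ancestors. Vertex-disjointness across the sequence of extracted spiders follows from the same depth argument: each subsequent spider will be centered at a vertex on or above $u$ in $T$ and can only use path-vertices that were not already claimed by $P$.

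The main obstacle is the degenerate case $|S'|=1$, where the induction does not apply directly. I would handle this by a final cleanup step: attach the stray terminal $s$ to the already-extracted spider $P$ by prepending the unique $T$-path from $s$ to the first vertex already in $P$, thereby turning $s$ into an additional leaf of a (possibly reshaped) spider; the vertex-disjointness across earlier-extracted spiders is preserved because this attachment path lies entirely in the portion of $T$ never touched by other spiders, by the deepest-vertex invariant. A secondary technicality is verifying that the extracted subgraph $P$ really satisfies the spider definition (at most one high-degree vertex and edge-disjoint root-to-leaf paths); this follows because the paths are chosen within the tree $T$ from a common ancestor $u$, so only $u$ can have degree exceeding $2$ in $P$.
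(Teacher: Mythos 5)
The paper never proves this lemma---it is imported verbatim from Klein and Ravi \cite{klein1995nearly}---so your attempt has to stand on its own, and its skeleton (root a spanning tree, extract a spider at the deepest vertex whose subtree contains two terminals, recurse, and fold a single leftover terminal into the last spider by its tree path) is indeed the classical Klein--Ravi argument. The genuine gap is your treatment of the case where the chosen center $u$ is itself a terminal with at least two terminals strictly below it in distinct child subtrees. Then $u$ is the center, not a leaf, of $P$, so $u$ survives in $S' = S \setminus \mathrm{leaves}(P)$, and you explicitly retain $u$ in $G'$; the inductive hypothesis must then make $u$ a leaf of some later spider, but any such spider shares the vertex $u$ with $P$, destroying the vertex-disjointness you need. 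This is not a presentational slip: under the lemma as literally stated here (every vertex of $S$ covered by a \emph{leaf}), the claim is in fact false---take $G=K_{1,3}$ with hub $c$ and $S$ all four vertices; no collection of vertex-disjoint spiders can have $c$ as a leaf while also covering the three pendant vertices. What Klein--Ravi actually prove (and what Lemma~\ref{SteinerTCc2b} actually uses, since $m_j$ counts \emph{all} supervertices in a spider, center included) is that every terminal is a leaf \emph{or the center} of some spider. Your proof should target that version: set $S' := S \setminus V(T_u)$, declare every terminal of $T_u$ covered by $P$ (as a leaf or as its center), and delete all of $P$---indeed all of $T_u$---never retaining $u$. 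Retaining $u$ is dangerous even when $u \notin S$, since a later spider in $G'$ may route through $u$ and again break disjointness.

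Smaller repairs: the existence of $u$ does not follow from ``the root $r$ is a candidate''---all terminals may lie in a single child subtree of $r$ with $r \notin S$, so $r$ fails your disjointness criterion; instead take the deepest vertex whose subtree contains at least two terminals and observe that, by maximal depth, each of its child subtrees contains at most one terminal, which yields the internally disjoint legs for free. The recursion should be applied to the connected piece containing the remaining terminals (simplest: recurse on the tree $T$ with the subtree $T_u$ removed, which is connected and contains all of $S'$), since $G'$ as you define it need not be connected. Finally, your disjointness claims for later stages (``each subsequent spider is centered on or above $u$'', ``the attachment path lies in the portion of $T$ never touched'') are unjustified once the recursion re-chooses a spanning tree of $G'$; they become unnecessary if each extraction deletes the entire subtree it used, because then every earlier spider lies wholly inside an already-deleted subtree and the $|S'|=1$ attachment path, which lives in the surviving part of the current tree, cannot meet it.
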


\begin{lemma}
Consider iteration $i$ in the greedy algorithm SteinerTC-ALGO. Let the height function at the beginning of iteration $i$ be $h_{i-1}$. Let $O\!P\!T$ be the cost of optimal height function $h_{O\!P\!T}$. The ratio of $\frac{cost}{benefit-1}$ associated with the height increment selected by iteration $i$ is at most $\frac{2O\!P\!T}{|cover(h_{i-1})|}$.
\label{SteinerTCc2b}
\end{lemma}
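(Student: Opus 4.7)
The plan is to adapt the Klein--Ravi \cite{klein1995nearly} spider-decomposition argument, used for TC in \cite{panigrahi2008minimum}, to the SteinerTC setting where non-terminal relays come pre-installed. First, I would consider the optimal Steiner tree $T_{O\!P\!T}$ induced by $h_{O\!P\!T}$ and contract each component of $cover(h_{i-1})$ into a single super-vertex, obtaining a contracted tree $T'$ on $m := |cover(h_{i-1})|$ super-vertices. The portion of the tower-height cost of $h_{O\!P\!T}$ that is chargeable to the edges of $T'$---i.e., the height contributions beyond what is already paid for inside each individual component of $cover(h_{i-1})$---is at most $O\!P\!T$.

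Second, I would apply Lemma~\ref{SpiderDecomp} to $T'$ with $S$ equal to the set of super-vertices, yielding vertex-disjoint spiders $SP_1, \dots, SP_N$ with $k_j \ge 2$ leaves and cost $c_j$, whose leaves cover $S$. Vertex-disjointness gives $\sum_j c_j \le O\!P\!T$; the fact that every super-vertex is the leaf of exactly one spider gives $\sum_j k_j = m$; and $k_j \ge 2$ forces $N \le m/2$, hence $\sum_j (k_j - 1) \ge m/2$. A pigeonhole argument then extracts a spider $SP^\ast$ (root super-vertex $V^\ast$, $k^\ast$ leaves, cost $c^\ast$) satisfying $\frac{c^\ast}{k^\ast - 1} \le \frac{O\!P\!T}{m/2} = \frac{2\,O\!P\!T}{m}$.

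Third, I would translate $SP^\ast$ back into a candidate height-increment move in $G$. $V^\ast$ corresponds to some component $C^\ast$, and the spider's internally vertex-disjoint paths to its $k^\ast$ leaf super-vertices realize in the original graph as logical-neighbor connections from some terminal $v^\ast \in C^\ast$, with non-terminals along the paths used as free relays. This $(v^\ast, \delta^\ast)$ candidate has tower cost exactly $c^\ast$ and, on execution, reduces the component count by $k^\ast$; its cost-to-(benefit$-1$) ratio is therefore at most $\frac{2\,O\!P\!T}{m}$. Since SteinerTC-ALGO's outer loop enumerates all terminal centers and all doubling-search increments and picks the one minimizing the cost-to-benefit ratio---with STAR-SteinerTC-ALGO being locally optimal at each pair and the doubling-search loss already absorbed by Lemma~\ref{SteinerTC2approximation}---the iteration's actually selected height increment achieves $\frac{cost}{benefit-1} \le \frac{2\,O\!P\!T}{|cover(h_{i-1})|}$.

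The main obstacle will be the translation step: one must verify that the spider $SP^\ast$'s paths in $T'$ can be realized in $G$ as valid logical-neighbor paths (with only non-terminal interior vertices, and avoiding accidental passage through other super-vertices, since such a crossing would put the path through a component other than $C^\ast$ or a leaf component), and that the spider-cost $c^\ast$ matches exactly the $cT\!ower^+$ increments charged by STAR-SteinerTC-ALGO. Some care will also be needed to thread the final factor of $2$ between the spider averaging (which already exploits $N \le m/2$) and the greedy's approximation guarantee, so that no additional factor of $2$ is introduced.
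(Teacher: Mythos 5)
Your overall route---contracting the components of $cover(h_{i-1})$ inside $cover(h_{O\!P\!T})$, invoking Lemma~\ref{SpiderDecomp}, charging the vertex-disjoint spiders against $O\!P\!T$, and translating one spider into a candidate star move that the greedy must compete with---is exactly the strategy of the paper's proof. The genuine gap is in the factor-of-$2$ accounting, which you flag at the end but do not resolve, and as written it does not close. Your averaging already spends the factor of $2$ on the pigeonhole step ($k_j \ge 2$ gives $\sum_j (k_j-1) \ge m/2$, hence some spider with $c^\ast/(k^\ast-1) \le 2\,O\!P\!T/m$). But comparing the algorithm's \emph{actually selected} increment with the spider-derived candidate $(v^\ast,\delta^\ast)$ costs a second, independent factor of $2$: SteinerTC-ALGO only probes doubling height increments, so by Lemma~\ref{SteinerTC2approximation} its selected ratio is only within twice that of the candidate, and this loss cannot be ``already absorbed''---multiplying the two losses gives $4\,O\!P\!T/m$, not $2\,O\!P\!T/m$. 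The paper avoids this with different bookkeeping: it compares the selected move with the denominator increased by one, $c_i/(b_i+1) \le sc_j/m_j$ (valid because $c_i \le sc_j$), and then averages using $\sum_j m_j = |cover(h_{i-1})|$ exactly, so the spider step loses no constant at all; the single factor of $2$ in the lemma is reserved entirely for the doubling-search loss of Lemma~\ref{SteinerTC2approximation}.

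A second, smaller problem is the benefit bookkeeping in your translation step. The spider root is in general a non-terminal relay that is not itself a component of $cover(h_{i-1})$, so the star realized at a terminal $v^\ast$ inside one leaf component merges the $k^\ast$ leaf components and reduces the component count by $k^\ast-1$, not $k^\ast$. With benefit $k^\ast-1$, the quantity you bound, $c^\ast/(k^\ast-1)$, is the candidate's cost-to-benefit ratio, while its cost/(benefit$-1$) would be $c^\ast/(k^\ast-2)$---unbounded for two-leaf spiders. This off-by-one is precisely why the paper works with cost/(benefit$+1$) against $sc_j/m_j$ rather than with $sc_j/(m_j-1)$; repairing your argument on this point essentially forces you back onto the paper's accounting.
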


\begin{proof}
Let $c_i$ be the cost increment and $b_i$ be the benefit associated with the height increment determined in iteration $i$. Note that edges in $cover(h_{O\!P\!T})$ associated with the corresponding height increments can be used to connect $|cover(h_{i-1})|$ components into a tree. Let $cover(h_{i-1})$ have $\phi_{i-1}$ components $D_1, \dots , D_{\phi_{i-1}}$. Let $T_i$ be the graph obtained from $cover(h_{O\!P\!T})$ by contracting $D_j \cap cover(h_{O\!P\!T}), j = 1, \dots, \phi_{i-1} $ to a supervertex. That is to say, starting with $cover(h_{O\!P\!T})$, for each component $D_j$, $D_j \cap cover(h_{O\!P\!T})$ is merged into a supervertex. All the supervertices will be connected in one single component after the final iteration is done. Note that the vertices in $D_j \setminus cover(h_{O\!P\!T})$, that are non-terminals, have already connected by the supervertex contracted from $D_j \cap cover(h_{O\!P\!T})$. Connecting all supervertices yields a Steiner tree spanning all terminals. 

Let $S$ denote the set of the $\phi_{i-1}$ supervertices. According to Lemma~\ref{SpiderDecomp}, $T_i$ contains a spider decomposition $S\!D(T_i, S)$. Each spider in $S\!D(T_i, S)$ is associated with height increments (derived from $h_{O\!P\!T}$) at its leaves (i.e., supervertices). The cost increment from $cover(h_{i-1})$ to $T_i$ is at most $O\!P\!T$. Let $r_1, \dots, r_w$ be the roots of the spiders in $S\!D(T_i, S)$ and $m_1, \dots, m_w$ be the number of supervertices in $S$ in each of these spiders. Let $sc_j$ denote the cost increment to form the spider with root $r_j$, where $m_j \geq 2$. Note that a spider with root $r_j$ spans a subset of components $D_1, \dots , D_{\phi_{i-1}}$, which is the $m_j$ components corresponding to $m_j$ supervertices in this spider. Accordingly, the cost-to-benefit ratio at $r_j$ is $\frac{sc_j}{m_j - 1}$. Note also that the cost-to-benefit ratio at $r_j$, a non-terminal with fixed height, can be considered the cost-to-benefit ratio at a terminal $v$ in the supervertices of the same spider, that is induced by an edge $(v, u)$, $u\! \in \!B$, in the spider associated with the height increment $h^+(v)$. Therefore, the cost-to-benefit ratio at $v$ in the spider is at least the cost-to-benefit ratio determined by STAR-SteinerTC-ALGO in a consideration of the same terminal $v$ and height increment $h^+(v)$.

Since SteinerTC-ALGO chooses a vertex with lowest cost-to-benefit ratio in each iteration, for each spider in $S\!D(T_i, S)$, $\frac{c_i}{b_i} \leq \frac{sc_j}{m_j - 1},  j = 1, \dots, w$. Moreover, $\frac{c_i}{b_i}$ is at most the average cost-to-benefit ratio over spiders in $S\!D(T_i, S)$, thus we have:
\begin{equation*}
\begin{array}{l}
\ \ \ \ \ \ \ \frac{c_i}{b_i} \leq \min_{j}{\frac{sc_j}{m_j-1}} \leq \frac{\sum_{j=1}^{w}{sc_j}}{\sum_{j=1}^{w}{(m_j - 1)}} \\
\implies \frac{c_i}{b_i+1} \leq \min_{j}{\frac{sc_j}{m_j}} \leq \frac{\sum_{j=1}^{w}{sc_j}}{\sum_{j=1}^{w}{m_j}} \\
\implies \frac{c_i}{b_i+1} \sum_{j=1}^{w}{m_j} \leq \sum_{j=1}^{w}{sc_j} \leq O\!P\!T
\label{spidereq1}
\end{array}
\end{equation*}
Due to $\sum_{j=1}^{w}{m_j} = \phi_{i-1}$, 
\begin{equation}
\begin{array}{l}
\frac{c_i}{b_i+1} \leq \frac{O\!P\!T}{\phi_{i-1}} = \frac{O\!P\!T}{|cover(h_{i-1})|}
\label{spidereq2}
\end{array}
\end{equation}
Note that $c_i \leq sc_j$. According to Lemma~\ref{SteinerTC2approximation}, a single iteration of the greedy algorithm SteinerTC-ALGO finds a 2-approximation of the optimal height increment, the minimum ratio of $\frac{cost}{benefit-1}$ for the height increment selected by SteinerTC-ALGO is thus at most $2 \times \frac{O\!P\!T}{|cover(h_{i-1})|} = \frac{2O\!P\!T}{|cover(h_{i-1})|}$.
\end{proof}

From Lemma~\ref{SteinerTCc2b}, the approximation ratio of $2\ln{|A|}O\!P\!T$ in Theorem~\ref{apporxSteinerTC} is shown as follows.
According to the definition of benefit, $\phi_i = \phi_{i-1} - b_i$ in iteration $i$. Substituting  Eq.~\ref{spidereq2} into this equation, we obtain
\begin{equation*}
\begin{array}{l}
\phi_i = \phi_{i-1} - b_i \leq \phi_{i-1} - \frac{b_i+1}{2} \leq  \phi_{i-1}(1 - \frac{c_i}{2OPT})
\end{array}
\end{equation*}
Let the total number of iterations taken in SteinerTC-ALGO be $p$ and $\phi_{p}=1$. We have 
\begin{equation*}
\begin{array}{l}
\ \ \ \ \ \ \ \phi_{p}=\phi_{0}\prod_{j=1}^{p}{(1 - \frac{c_j}{2OPT})} \\
\implies  0 = \ln{\frac{\phi_{0}}{\phi_{p}}} + \ln{\prod_{j=1}^{p}{(1 - \frac{c_j}{2OPT})}}
\end{array}
\end{equation*}
By using $\ln{(1+x)}\leq \ln{x}$, we get 
\begin{equation*}
\begin{array}{l}
\ \ \ \ \ \ \  0 = \ln{\frac{\phi_{0}}{\phi_{p}}} + \ln{\prod_{j=1}^{p}{(1 - \frac{c_j}{2OPT})}} \\
\ \ \ \ \ \ \ \ \ \leq \ln{\frac{\phi_{0}}{\phi_{p}}}+ \sum_{j=1}^{p}{\frac{-c_j}{2OPT}} \\
\implies \frac{1}{2OPT}\sum_{j=1}^{p}{c_j} \leq \ln{\frac{\phi_{0}}{\phi_{p}}} \\
\implies \sum_{j=1}^{p}{c_j} \leq 2\ln{|A|}OPT
\end{array}
\end{equation*}
This proves that the total cost for the height function $h$ returned by SteinerTC-ALGO is at most $2\ln{|A|}OPT$.


\end{document}